\newtheorem{thm}{Theorem}\crefname{thm}{Theorem}{Theorems}
\newtheorem{lem}[thm]{Lemma}\crefname{lem}{Lemma}{Lemmas}
\crefname{prop}{Proposition}{Propositions}
\newtheorem{corr}[thm]{Corollary}\crefname{corr}{Corollary}{Corollaries}
\newcommand{\CH}{\mathcal H}
\newcommand{\CN}{\mathcal N}
\newcommand{\CO}{\mathcal O}
\newcommand{\CP}{\mathcal P}
\newcommand{\CR}{\mathcal R}
\newcommand{\ot}{\otimes}
\DeclareMathOperator{\supp}{supp}
\newcommand{\drawRT}[6]{
\pgfmathparse{#2-#1==180};  
\ifnum\pgfmathresult>0 \draw [very thick, shorten >= -0.5, shorten <= -0.5] (#2:#3)--(#1:#3);  
	\draw [very thick, shorten >= -0.6, shorten <= -0.6] (#1:#3) arc (#1:#2:#3); 

\else
	\coordinate (O) at (0,0);
	\coordinate (A) at (#1:#3);
	\coordinate (B) at (#2:#3);
	\coordinate (Ap) at ([shift={({#1-90}:#3)}]A);
	\coordinate (App) at ([shift={({#1+90}:#3)}]A);
	\coordinate (Bp) at ([shift={({#2-90}:#3)}]B);
	\coordinate (Bpp) at ([shift={({#2+90}:#3)}]B);
	\coordinate (X) at (intersection of Ap--App and Bp--Bpp);
	\pgfmathparse{#4==1};
	\ifnum\pgfmathresult>0 \draw [very thick, shorten >= -0.6, shorten <= -0.6] (A) arc (#1:#2:#3); \fi
	\path (X); \pgfgetlastxy{\XCoord}{\YCoord}; 
	\pgfmathsetmacro{\rotateamt}{\ifdim\XCoord<0pt {atan(\YCoord/\XCoord)} \else \ifdim\XCoord>0pt {atan(\YCoord/\XCoord)+180} \else {-\YCoord/abs(\YCoord)*90}\fi \fi}

	\begin{scope}[rotate=\rotateamt]
		\pgfmathparse{#5==1}; 
		\ifnum\pgfmathresult>0 \fill[gray, opacity=0.2, #6]
		let
	 		\p1 = (A),
			\p2 = (B),
	 		\p3 = (X),
			\p4 = (A |- X), 
			\p5 = (B |-X), 
			\n1 = {veclen((\x3-\x1),(\y3-\y1))}, 
			\n2 = {\y1-\y3}, 
			\n3 = {\y2-\y3}  
			in
			([shift=({#1-\rotateamt}:#3)]O) arc ({#1-\rotateamt}:{#2-\rotateamt}:#3) -- ([shift=({asin(\n3/\n1)}:\n1)]X) arc ({asin(\n3/\n1)}:{asin(\n2/\n1)}:\n1);\fi 

		\draw [very thick, shorten >= 0.0, shorten <= 0.0]
			let
	 		\p1 = (A),
			\p2 = (B),
	 		\p3 = (X),
			\p4 = (A |- X), 
			\p5 = (B |-X), 
			\n1 = {veclen((\x3-\x1),(\y3-\y1))}, 
			\n2 = {\y1-\y3}, 
			\n3 = {\y2-\y3}  
			in
			([shift=({asin(\n3/\n1)}:\n1)]X) arc ({asin(\n3/\n1)}:{asin(\n2/\n1)}:\n1); 
	\end{scope}
\fi
}
\begin{document}
\title{Entanglement Wedge Reconstruction via Universal Recovery Channels}
\author{Jordan Cotler}
\author{Patrick Hayden}
\author{Geoffrey Penington}
\author{Grant Salton}
\affiliation{Stanford Institute for Theoretical Physics, Stanford University, Stanford CA 94305, USA}
\author{Brian Swingle}
\affiliation{Stanford Institute for Theoretical Physics, Stanford University, Stanford CA 94305, USA}
\affiliation{Department of Physics, Harvard University, Cambridge MA 02138, USA}
\affiliation{Martin Fisher School of Physics, Brandeis University, Waltham MA 02453, USA}
\author{Michael Walter}
\affiliation{Stanford Institute for Theoretical Physics, Stanford University, Stanford CA 94305, USA}
\affiliation{Korteweg-de Vries Institute for Mathematics, Institute for Theoretical Physics, Institute for Logic, Language and Computation \& QuSoft, University of Amsterdam, The Netherlands}

\begin{abstract}
We apply and extend the theory of universal recovery channels from quantum information theory to address the problem of entanglement wedge reconstruction in AdS/CFT\@.
It has recently been proposed that any low-energy local bulk operators in a CFT boundary region's entanglement wedge can be reconstructed on that boundary region itself.
Existing work arguing for this proposal relies on algebraic consequences of the exact equivalence between bulk and boundary relative entropies, namely the theory of operator algebra quantum error correction.
However, bulk and boundary relative entropies are only approximately equal in bulk effective field theory, and in similar situations it is known that predictions from exact entropic equalities can be qualitatively incorrect.
The framework of universal recovery channels provides a robust demonstration of the entanglement wedge reconstruction conjecture in addition to new physical insights.
Most notably, we find that a bulk operator acting in a given boundary region's entanglement wedge can be expressed as the response of the boundary region's modular Hamiltonian to a perturbation of the bulk state in the direction of the bulk operator.
This formula can be interpreted as a noncommutative version of Bayes' rule that attempts to undo the noise induced by restricting to only a portion of the boundary, and has an integral representation in terms of modular flows.
To reach these conclusions, we extend the theory of universal recovery channels to finite-dimensional operator algebras and demonstrate that recovery channels approximately preserve the multiplicative structure of the operator algebra.
\end{abstract}
\maketitle

The AdS/CFT correspondence is a duality between a gravitational theory in $d+1$-dimensional asymptotically AdS space and a conformal field theory in one fewer spatial dimensions~\cite{Maldacena1999,HKLL,Hamilton2006,Hamilton2007,Kabat2011}.   The CFT lives on the boundary of the bulk AdS space, and the quantum state of the boundary CFT is dual to the state of the quantum gravity theory in the bulk.  Certain CFT states correspond to classical geometries in the bulk, and this important class of states occupies much of our attention.  For such states, there is truly an emergent spatial direction in the bulk theory and understanding how locality with respect to the bulk geometry arises in the boundary theory is a longstanding problem.
One way to approach the problem is to ask which regions of the bulk are completely described by a given region of the boundary.

The question above has been phrased in various forms over the years~\cite{Czech2012,Czech2012n2}, but a natural version is to identify all local bulk operators that can be expressed in terms of boundary operators with support only on the boundary subregion. Significant progress has been made on that problem, starting with the so-called HKLL prescription~\cite{HKLL}.
This method works in certain cases, but falls short in general; there are bulk operators that should be expressible on the boundary subregion that are inaccessible to this technique.  A natural example is the operator corresponding to the area of the minimal surface in the bulk that calculates the entropy of the boundary subregion, the so-called Ryu-Takayanagi surface~\cite{ryu2006holographic,hubeny2007covariant}.
For a single interval in empty AdS, HKLL and related techniques provide the answer, but even for a disconnected subregion of the boundary with total size greater than half, the minimal surface in the bulk falls outside of the purview of HKLL\@. We will review these techniques in detail later.


The problem of finding the dual to a boundary subregion is at the heart of the subject of bulk reconstruction:  given an operator in the bulk, can one find a representation of this operator acting on a subregion of the boundary?  Recently it has been proposed that, for a given subregion $A$ of the conformal boundary, any low-energy bulk operator acting on the \emph{entanglement wedge} of $A$ (defined to be the bulk domain of dependence of
any achronal bulk surface bounded by $A$ and its associated covariant Ryu-Takayanagi minimal surface) can be reconstructed using only information in $A$~\cite{Czech2012,Headrick2014,Wall2014}.
The entanglement wedge reconstruction conjecture was strengthened in~\cite{Pastawski2015} and established in tensor network toy models of holography~\cite{Pastawski2015,hayden2016holographic,nezami2016multipartite}.
Very recently, it was proved in~\cite{Dong2016,JLMS2016} under the condition that the bulk and boundary relative entropies be \emph{exactly} equal.

At large but finite $N$, within the framework of bulk effective field theory, one only expects approximate equality of the bulk and boundary relative entropies.
When similar situations were studied in the quantum information theory literature, it was found that algebraic consequences of exact entropic equalities often do not provide qualitatively correct predictions in the approximate case.
An important and, in fact, closely related example is the exact saturation of strong subadditivity of the von Neumann entropy, which is known to imply that the underlying state is a quantum Markov chain~\cite{hayden2004structure}. The associated algebraic structure slightly generalizes operator algebra quantum error correction, precisely the structure relevant in~\cite{Dong2016}. Near-saturation of strong subadditivity, however, \emph{fails} to imply proximity to a quantum Markov chain state for systems of large Hilbert space dimension~\cite{Ibinson2008,Ding2016}. 

Indeed the most direct generalization of the reconstruction theorem from~\cite{Dong2016} to approximate relative entropy equalities does not lead to full approximate quantum error correction (which is what we establish in this paper), but rather to a significantly weaker condition known as zerobits~\cite{hayden2017approximate,hayden2012weak}. Ignoring this distinction can lead to qualitatively wrong conclusions for code spaces whose dimension grows too fast in the large $N$ limit (for example code spaces containing a large number of black hole microstates)~\cite{hayden2018learning}. For mixed states in such code spaces, the entanglement wedge of a boundary region $A$ may be strictly smaller than the bulk complement of the entanglement wedge of the complementary boundary region $\bar A$ even in the limit $N \to \infty$. A naive application of~\cite{Dong2016} would incorrectly suggest that any operator in the larger latter region can be reconstructed in region $A$; in fact only operators in the entanglement wedge of $A$ can be reconstructed in a state-independent way. 

In this article, we demonstrate the entanglement wedge reconstruction conjecture but without assuming exact equality of bulk and boundary relative entropies.
In doing so, we also provide an \emph{explicit} formula for entanglement wedge reconstruction.
Our analysis builds on recent results in quantum information theory on finding sufficient conditions to approximately reverse the effects of noise.  A quantum channel $\CN$ (\emph{i.e.}, completely-positive, trace-preserving map) is said to be \emph{reversible} if there exists another quantum channel $\CR$ -- known as the \emph{recovery channel} -- such that the composition $\CR\circ\CN$ acts as the identity on all states in the domain of $\CN$ (\emph{i.e.}, $\CR\circ\CN[\rho]=\rho$).
For example, all unitary operations are reversible, with the adjoint of the unitary acting as a recovery channel (since $U^\dagger U = \id$), and quantum error correcting codes are designed around noise processes such that the noise can be reversed on the code subspace.  When a channel $\CN$ is reversible, it has been known for some time how to construct a recovery channel $\CR$~\cite{petz1986sufficient}.
Exact reversibility will almost never be satisfied in practice, however, and for many applications one may only require that a channel be approximately reversible.  For example, in approximate quantum error correction one only requires that the recovered state $\CR\circ\CN[\rho]$ be close to the input $\rho$ up to some small tolerance.  For a channel that is not exactly reversible, it is natural to ask whether or not there exists a recovery channel that works approximately in the above sense~\cite{winter2012stronger}.
This question has spurred a flurry of research and was answered only recently in~\cite{Junge2015}, wherein it was shown that for \emph{any} quantum channel $\CN$, there indeed exists an approximate recovery channel $\CR$ such that $\CR\circ\CN[\rho]\approx\rho,\;\forall\rho$, with the quality of the approximation controlled by the behavior of the relative entropy under the action of the channel $\CN$.  We will review these recent results in the next section.
In the context of AdS/CFT, there is a map from the bulk to the boundary, and a noisy quantum channel arises from tracing over a subregion of the boundary. Our ultimate goal will be to recover from that noise process (\emph{i.e.}, recover from the loss of part of the boundary).

The paper is organized as follows.  We begin with a review of recent results on universal recovery channels, as well as a few basics of AdS/CFT and recent results in holography.  We then apply the theory of universal recovery channels to the problem of entanglement wedge reconstruction in AdS/CFT, and we arrive at an explicit expression for a bulk operator recovered on the boundary.
After discussing its salient structural properties, we sketch how our formula applies to AdS$_3$-Rindler reconstruction, and we conclude with possible avenues for future research.
In the appendix, we prove our entanglement wedge reconstruction result for arbitrary finite-dimensional algebras of observables, thereby obviating simplifying assumptions used in the main body of the paper. To do this, we extend the universal recovery results of~\cite{Junge2015} to finite-dimensional von Neumann algebras.
Importantly, we also prove that approximate recovery channels automatically approximately preserve the multiplicative structure of the original bulk algebra. This ensures that correlation functions of bulk operators can be approximated by correlation functions of boundary reconstructions of the individual operators, even if each operator is reconstructed using a different entanglement wedge.

\section{Preliminaries}
In this section, we review some recent results on recovery channels and AdS/CFT that we later use to establish our results.

\subsection{Universal recovery channels}
To develop intuition, it is helpful to set aside quantum mechanics and consider just probability theory. The problem of reversing the effects of noise then just reduces to the problem of reversing a stochastic map. In particular, given a stochastic map $p(y|x)$ and an observation of $y$, try to infer $x$. One way to do this is to introduce the stochastic map $p(x|y)$ via Bayes' rule
\begin{equation}\label{Bayes}
p(x|y)=\frac{p(x)p(y|x)}{p(y)},
\end{equation}
which has the property that $\sum_y p(x'|y) p(y|x) = \delta_{xx'}$ if the noise can be reversed.
In situations when the noise cannot be perfectly reversed, Bayes' rule provides an excellent (and in many ways optimal) estimate of the input.
Since it will prove useful in solving the quantum version of the problem, it is worth noting that we can trivially rewrite \cref{Bayes} as
\begin{equation}\label{BayesDerivative}
	p(x|y)=\left.\frac{d}{dt}\right|_{t=0} \log\left(\frac{p(y)}{p(x)}+t\; p(y|x)\right).
\end{equation}
In other words, the recovery channel $p(x|y)$ can be expressed as the logarithmic directional derivative of the matrix $p(y){p(x)}^{-1}$ in the direction of the channel $p(y|x)$.

Let us now consider a noncommutative generalization of Bayes' rule.
We would like to find a quantum channel that reverses the action of some input channel.
To make the problem precise, consider two Hilbert spaces $\HS_A$ and $\HS_B$.  Let $S(\HS_A)$ and $S(\HS_B)$ represent the sets of density operators on systems $A$ and $B$, respectively.  A quantum channel $\CN\colon S(\HS_A)\rightarrow S(\HS_B)$ is said to be \emph{reversible} if there exists a quantum channel $\CR\colon S(\HS_B)\rightarrow S(\HS_A)$, called the \emph{recovery channel}, such that
\begin{equation}\label{reversiblechannel}
(\CR\circ\CN)[\rho]=\rho\quad \text{for all }\,\,\rho\in S(\HS_A).
\end{equation}
A simple example in which reversible channels play a starring role is quantum error correction, in which $\CN$ is the composition of encoding some degrees of freedom in a code space $\mathcal H_A$ into a potentially larger Hilbert space, followed by a noise channel (wherein we lose certain degrees of freedom, or otherwise corrupt the encoded state).  $\CR$ is then a decoding map, and \cref{reversiblechannel} corresponds to perfect quantum error correction.

One way to quantify the noisiness of a quantum channel is by comparing the distinguishability of input states and output states using the relative entropy.
Under the action of a quantum channel $\CN$, the relative entropy between two states can never increase.
This fact is known as the \emph{monotonicity of relative entropy} or the \emph{data processing inequality}:
\begin{equation}\label{monotonicityofRE}
D(\rho \|\sigma)\geq D(\CN[\rho]\|\CN[\sigma]),
\end{equation}
where $D(\rho \|\sigma):=\tr \rho\log\rho - \tr\rho\log\sigma$ is the \emph{relative entropy} between $\rho,\sigma$.
If there were a recovery channel $\CR$ such that $(\CR \circ \CN)(\rho)=\rho$ and $(\CR\circ\CN)(\sigma)=\sigma$, monotonicity applied a second time to $\CR$ would imply saturation of \cref{monotonicityofRE}. In fact, the converse is also true.
Equality in \cref{monotonicityofRE} holds if and only if there exists a recovery channel $\CR$ such that $(\CR \circ \CN)(\rho)=\rho$ and $(\CR\circ\CN)(\sigma)=\sigma$.  In this case, Petz~\cite{Ohya2004} identified an exact recovery channel $\CR=\CP_{\sigma,\CN}$ given by
\begin{equation}\label{petzmap}
\CP_{\sigma,\CN}=\sigma^{1/2}\CN^*\bigl[{\CN[\sigma]}^{-1/2}(\cdot){\CN[\sigma]}^{-1/2}\bigr]\sigma^{1/2},
\end{equation}
where $\CN^*$ denotes the adjoint of the channel $\CN$.
Because $\CP_{\sigma,\CN}$ does not depend on $\rho$, it can be used for all $\rho$ saturating~\eqref{monotonicityofRE}
(provided that $\sigma$ is chosen to be full-rank to ensure that the relative entropies in~\eqref{monotonicityofRE} are finite for all states $\rho$).

Failure to saturate~\eqref{monotonicityofRE} means that an exact reversal map $\CR$ cannot exist but there is still the possibility of an approximate reversal map which would behave well in cases of near-saturation, as does Bayes' rule in the stochastic case.   Indeed, an approximate version of the recovery channel was developed by~\citet{Junge2015}, who show that, for any $\rho,\sigma\in S(\HS_A)$ and any quantum channel $\CN$,  there exists a recovery channel $\CR_{\sigma,\CN}$ such that
\begin{equation}\label{JungeMonotonicity}
D(\rho \|\sigma)-D(\CN[\rho] \| \CN[\sigma])\geq -2\log F(\rho,\CR_{\sigma,\CN}\circ\CN[\rho]),
\end{equation}
where $F(\rho,\sigma):=\onenorm{\sqrt{\rho}\sqrt{\sigma}}$ is the fidelity.
The inequality says that the fidelity between the recovered state and the original is controlled by the saturation gap in \cref{monotonicityofRE}, with perfect fidelity in the case of saturation. Importantly, there is no dependence on the dimension of the Hilbert space. 
Moreover, \Citet{Junge2015} gave a concrete expression for the channel $\CR_{\sigma,\CN}$, called the \emph{twirled Petz map} and given by
\begin{equation}\label{twirledPetz}
\CR_{\sigma,\CN}\!:=\!\!\int_\RR \!\!\!dt \, \beta_0(t)\sigma^{-\frac{it}{2}}\CP_{\sigma,\CN}\bigl[{\CN[\sigma]}^{\frac{it}{2}}(\cdot){\CN[\sigma]}^{-\frac{it}{2}}\bigr]\sigma^{\frac{it}{2}},
\end{equation}
where $\CP_{\sigma,\CN}$ is the so-called \emph{Petz map} of \cref{petzmap}, and $\beta_0$ is the probability density $\beta_0(t):=\frac{\pi}{2}{\left(\cosh(\pi t)+1\right)}^{-1}$.
When $\sigma$ is full-rank, both the Petz map and the twirled Petz map are trace-preserving completely positive maps (\emph{i.e.}, quantum channels).

By working with the Choi operator, we can rewrite the recovery channel $\CR_{\sigma,\CN}$ in a form similar to \cref{BayesDerivative}.
For a completely positive map $\mathcal N$, the \emph{Choi operator} is defined by $\Phi_{\mathcal N} := (\operatorname{id} \ot \mathcal N)[\Phi]$, where $\ket{\Phi} = \sum_j \ket{j}\ket{j}$ is an unnormalized maximally entangled state.
In the case of the recovery channel $\CR_{\sigma,\CN}$, the Choi operator can be expressed as
\begin{equation}\label{qBayesDerivative}
\Phi_{\CR_{\sigma,\CN}}= \left.\frac{d}{dt}\right|_{t = 0} \log(\overline{\CN[\sigma]} \otimes \sigma^{-1} + t \, \Phi_{\mathcal{N}^*}),
\end{equation}
where $\overline{\mathcal{N}[\sigma]}$ is the complex conjugate of $\mathcal{N}[\sigma]$, and $\sigma^{-1}$ is the inverse of $\sigma$ on its support.
A proof can be found in the appendix.
This is the appropriate generalization of Bayes' rule to the noncommutative case.
When the channel is reversible, the twirled Petz map $\CR_{\sigma,\CN}$ reduces to the Petz map $\CP_{\sigma,\CN}$.
In the classical case both the Petz map and the twirled Petz map reduce to Bayes' rule.

%

\subsection{AdS/CFT background}
The AdS/CFT correspondence states that quantum gravity in $d+1$ spatial dimensions is dual to a conformal field theory in $d$ spatial dimensions.
There are two main dictionaries that describe the mapping between bulk and boundary quantities in the AdS/CFT correspondence: the \emph{differentiate} dictionary~\cite{Gubser1998,Witten1998}, and the \emph{extrapolate} dictionary~\cite{Banks1998}.
The differentiate dictionary relies on the equivalence between the partition functions of the bulk and boundary theories ($Z_\text{CFT}=Z_\text{grav}$).
On the other hand, the extrapolate dictionary relies on the fact that local CFT operators living in the boundary theory can be expressed as the limit of appropriately weighted bulk fields as they are taken to the conformal boundary of the AdS spacetime.
In particular,
\[ \CO(x)=\lim_{z\to\infty}z^{-\Delta}\phi(x,z), \]
where $\CO$ is a boundary field, $\Delta$ is the scaling dimension of $\CO$, and $\phi$ is a bulk field.  With this equivalence, boundary correlation functions can be expressed as
\[ \expect{\CO(x_1)\dots\CO(x_n)}_\text{CFT}\!=\!\!\lim_{z\to\infty}\!z^{-n\Delta}\!\expect{\phi(x_1,z)\dots\phi(x_n,z)}_\text{bulk}\!. \]

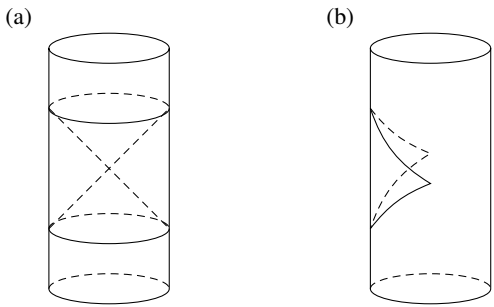
\begin{figure}[!t]
\centering
\begin{tikzpicture}[scale=0.4]
\node (label1) at (-3,8){(a)};
\draw (-2,7) -- (-2,-1) arc (180:360:2cm and 0.5cm) -- (2,7) ++ (-2,0) circle (2cm and 0.5cm);
\draw[densely dashed] (-2,-1) arc (180:0:2cm and 0.5cm);
\draw (-2,1) arc (180:360:2cm and 0.5cm) ;
\draw (-2,5) arc (180:360:2cm and 0.5cm) ;
\draw[densely dashed] (-2,1) arc (180:0:2cm and 0.5cm);
\draw[densely dashed] (-2,5) arc (180:0:2cm and 0.5cm);
\draw[densely dashed] (-2,5) -- (0,3) -- (2,5);
\draw[densely dashed] (-2,1) -- (0,3) -- (2,1);
\end{tikzpicture}
\qquad\qquad\qquad
\begin{tikzpicture}[scale=0.4]
\node (label2) at (-3,8){(b)};
\draw (-2,7) -- (-2,-1) arc (180:360:2cm and 0.5cm) -- (2,7) ++ (-2,0) circle (2cm and 0.5cm);
\draw[densely dashed] (-2,-1) arc (180:0:2cm and 0.5cm);
\draw (-2,5) to [bend right=20] (0, 2.5) to [bend right=15] (-2,1);
\draw[densely dashed] (-2,5) to [bend right=15] (0, 3.5) to [bend right=20] (-2,1);
\end{tikzpicture}
\caption{(a) The HKLL procedure provides a way of writing bulk operators in terms of boundary operators living on a strip in the boundary consisting of all points that are spacelike separated from the bulk point. (b) The causal wedge HKLL procedure provides a way of expressing bulk operators in terms of boundary operators living only in the domain of dependence of a boundary region whose associated causal wedge contains the bulk point.}\label{HKLLandAdSRindler}
\end{figure}

The HKLL procedure~\cite{HKLL} uses the extrapolate dictionary and the bulk equations of motion to write a local bulk field operator $\phi(x,z)$ as a smearing of boundary operators $\CO$ acting on the ``strip'' of boundary points that are spacelike separated from $(x,z)$, as shown in \cref{HKLLandAdSRindler}.  In particular,
\[ \phi(x,z)=\int_\text{strip} dx' K(x,z,x') \mathcal{O}(x'), \]
where the smearing function $K$ can be computed using a mode-sum expansion.  The choice of smearing function $K$ is not unique, and there are different choices of $K$ that will reproduce the same bulk field $\phi$.
For instance, bulk field operators $\phi(x,z)$ can be written using a smearing function that is supported only on a subset of the boundary, namely, the domain of dependence of a boundary subregion whose \emph{causal wedge} contains the bulk point, as shown in \cref{HKLLandAdSRindler}.

Using the HKLL procedure, one can find representations of a given bulk operator on different regions in the boundary.
As observed in~\cite{Almheiri2014}, this redundancy is a reflection of the quantum error correcting properties of AdS/CFT\@.
If we fix a boundary region $A$, the \emph{entanglement wedge reconstruction proposal} asserts that any local bulk operator acting on the entanglement wedge of $A$ has a representation with support only on $A$.

The key input from holography is that bulk and boundary relative entropies are approximately equal~\cite{JLMS2016}.  The relative entropy between two ``nice'' states in the bulk and their associated states on the boundary are equal to leading order in $1/N$.
To be precise, let $\rho$ and $\sigma$ be two bulk states with the same semi-classical geometry, $\rho_a$ and $\sigma_a$ be their reduced density matrices on the entanglement wedge of $A$,
$\widetilde{\rho}$ and $\widetilde{\sigma}$ be the corresponding boundary states, and
$\widetilde{\rho}_A$ and $\widetilde{\sigma}_A$ be their reduced density matrices on region $A$.
\Citet{JLMS2016} then showed that
\begin{equation}\label{JLMSschematic}
D(\widetilde{\rho}_A \| \widetilde{\sigma}_A)=D(\rho_a \| \sigma_a) + \mathcal{O}(1/N).
\end{equation}
At higher orders in $1/N$, the equivalence in \cref{JLMSschematic} is no longer well-defined, since the choice of minimal surface used to define the entanglement wedge becomes state dependent.
Crucially, AdS/CFT provides only a global map from bulk to boundary states ($\rho\mapsto\widetilde\rho$).
Our approach to entanglement wedge reconstruction will be to construct a suitable \emph{local} quantum channel mapping states in the entanglement wedge to states in the boundary region ($\rho_a\mapsto\widetilde\rho_A$).
Only then can we interpret \cref{JLMSschematic} as an approximate saturation of the monotonicity of the relative entropy for a quantum channel, so that \cref{JungeMonotonicity} guarantees the existence of an approximate recovery map.
It is the adjoint of this recovery map that we will ultimately use for reconstruction.


\begin{figure}[!t]
\centering\hspace{2em}
\begin{tikzpicture}
	\draw [use as bounding box] (0,0) circle (1); 
	\node (a) at (-1.5,1.2) {(a)};
	\drawRT{110}{250}{1}{1}{1}{};
	\node at (-0.5,0) {$a$};
	\node at (0.5,0) {$\bar{a}$};
	\node at (-1.2,0) {$A$};
	\node at (1.2,0) {$\bar{A}$};
\end{tikzpicture}
\qquad\qquad\qquad\qquad
\begin{tikzpicture}

	\fill [gray, opacity=0.3] (0,0) circle (1); 
	\draw [use as bounding box] (0,0) circle (1); 
	\node (b) at (-1.5,1.2) {(b)};
	\drawRT{137}{223}{1}{0}{1}{fill opacity=1, color=white};
	\drawRT{-43}{43}{1}{0}{1}{fill opacity=1, color=white};
	\draw [very thick, shorten >= -0.6, shorten <= -0.6] (43:1) arc (43:137:1);
	\draw [very thick, shorten >= -0.6, shorten <= -0.6] (223:1) arc (223:317:1);
	\node at (0,0) {$a$};
	\node at (0,1.2) {$A$};
	\node at (0,-1.2) {$A$};
	\draw [use as bounding box] (0,0) circle (1); 
\end{tikzpicture}
\caption{(a) A bipartition of the boundary into a connected piece $A$ and its complement $\bar{A}$.  In this case, the entanglement wedge of $A$ coincides with the causal wedge of $A$, labeled by $a$ in the figure.  $\bar{a}$ is then the complement of $a$.  (b) A bipartition of the boundary into $A$ and $\bar{A}$ such that $A$ consists of two disconnected components.  In the figure above, $A$ spans just more than half of the boundary, and in this case the entanglement wedge of $A$ is not simply the union of the causal wedges of each piece of $A$.  In the bulk, $a$ represents the entanglement wedge of $A$ and $\bar{a}$ is the complement of $a$.}\label{bulkboundarydecomp}
\end{figure}
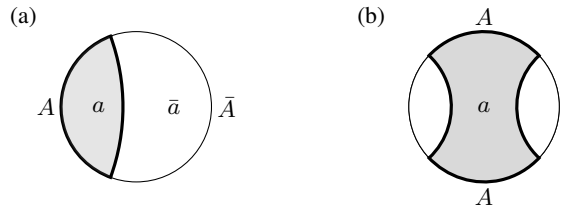

\section{Entanglement wedge reconstruction}
We will make the following assumptions:
\begin{enumerate}
\item the bulk Hilbert space contains a code subspace that is mapped via a quantum channel into the CFT Hilbert space, and
\item the JLMS relative entropy condition holds to leading order in $1/N$.
\end{enumerate}
In fact, the first condition is slightly relaxed in the most general version of our result (\cref{thm:algebraic entanglement wedge}).
Just for the purposes of illustration, we will also pretend that the bulk and boundary Hilbert spaces admit simple tensor factorizations, since this is a familiar convention in the community.  However, the reader is cautioned that this convention is not actually correct -- even for free theories the Hilbert space does not factorize, and the problem is only compounded for gauge theories.  The proper approach is to work at the level of algebras of observables, without ever making a tensor product assumption.  As such, we adopt the more general algebraic approach in the appendix, proving all claims made in this paper rigorously at the level of finite-dimensional von Neumann algebras.


With these assumptions, let us formalize the problem.
Let $\HS_\text{code}$ be a code space with density operators $S(\HS_\text{code})$, and let $\HS_\text{CFT}$ be the Hilbert space of a CFT with density operators $S(\HS_\text{CFT})$.
There are many valid definitions of what it means to be a code space in AdS/CFT, but our results hold for any suitable definition.
For instance, one can define the code space to be the set of all states formed by acting with a finite number of low-energy, local bulk operators on the vacuum~\cite{Almheiri2014,Dong2016}.
In this context, low-energy means the action of the operator does not change the bulk geometry appreciably.
The AdS/CFT correspondence relates states in $S(\HS_\text{code})$ to states in $S(\HS_\text{CFT})$.
We model this relationship by an isometry $J\colon\HS_\text{code}\to \HS_\text{CFT}$ embedding the code space into the CFT Hilbert space.
Note that $\HS_\text{code}$ can be identified with its image under the isometry $J$, resulting in the \emph{code subspace} of previous works~\cite{Dong2016,Harlow2016}.
In general, one could consider an arbitrary quantum channel mapping states on the code space to states on the CFT Hilbert space. We prove our general result in \cref{thm:algebraic entanglement wedge} without requiring the mapping between code and CFT states to be an isometry.

We now partition the CFT into two regions $A$ and $\bar{A}$, and we partition the bulk into $a$ and $\bar{a}$, where $a$ is supported only on the entanglement wedge of $A$, as shown in \cref{bulkboundarydecomp}.  As discussed earlier, we assume that $\HS_\text{CFT}=\HS_A\ox\HS_{\bar{A}}$ and that $\HS_\text{code}=\HS_a\ox\HS_{\bar{a}}$.
The problem of entanglement wedge reconstruction then amounts to constructing a boundary observable $\CO_A$ supported only on $A$, such that, for any bulk operator $\phi_a$ supported in the entanglement wedge $a$ of $A$
\begin{equation}\label{bulkboundaryexpectationvalues}
\big\lvert\langle \CO_A\rangle_{J \rho J^\dagger} - \langle \phi_a \rangle_\rho\bigr\rvert \leq \delta \lVert\phi_a\rVert,
\end{equation}
for all $\rho\in S(\HS_\text{code})$ and for some small $\delta>0$.
Using our new notation, \cref{JLMSschematic} says that, for all $\rho,\sigma\in S(\HS_\text{code})$,
\begin{equation}\label{JLMSsubsystem}
\left| D(\rho_a \Vert \sigma_a)-D\bigl({(J\rho J^\dagger)}_A \Vert {(J\sigma J^\dagger)}_A\bigr) \right| \leq\epsilon,
\end{equation}
where $\epsilon$ is controlled by $1/N$, and the notation ${(\cdot)}_A:=\tr_{\bar A}(\cdot)$ is a shorthand we will use throughout.

Note that in this paper we are using the approximate equality of relative entropies in $a$ and $A$ in order to prove that operators in region $a$ can be reconstructed in $A$. In contrast, the starting assumption in \cite{Dong2016} was an exact equality between relative entropies in $\bar a$ and $\bar A$. As discussed in the introduction, an approximate version of this latter condition implies only that the zerobits of region $a$ are encoded in region $A$, which is a significantly weaker condition than full entanglement wedge reconstruction \cite{hayden2017approximate,hayden2012weak,hayden2018learning}.

Since the relative entropies approximately agree, one might expect that we can find a universal recovery channel that would undo the effect of the partial trace over $\bar{A}$.  However, there is an obstacle:
\emph{a priori}, ${(J\rho J^\dagger)}_A$ depends on the state $\rho$ defined on the whole bulk, not just on the reduced state on the entanglement wedge, $\rho_a$.  In this form, the theory of recovery channels is not applicable.  To overcome this challenge, we will first restrict the recovery problem to special code states of the form $\rho = \rho_{a} \otimes \sigma_{\bar a}$, where $\sigma_{\bar a}$ is some fixed fiducial state.
We thus obtain a quantum channel $\rho_a \mapsto {(J\rho J^\dagger)}_A$ mapping states on the entanglement wedge to states on the boundary region. We will then verify that the recovery map $\CR$ obtained for this channel works in fact for \textit{all} code states $\rho$, only increasing the error by a small amount, since \cref{JLMSsubsystem} also implies that that the CFT states corresponding to any $\rho$ and its factorized version $\rho_a\ox\sigma_{\bar a}$ are approximately indistinguishable on the boundary region $A$.
The adjoint $\mathcal{R}^*$ of the recovery channel will then map bulk operators $\phi_a$ supported in the entanglement wedge to boundary operators $\CO_A$ supported on $A$ and satisfying \cref{bulkboundaryexpectationvalues}, thereby achieving entanglement wedge reconstruction.

In more mathematical detail, we first define the \emph{local channel} $\CN\colon S(\HS_a)\to S(\HS_A)$ by
\begin{equation}\label{eq:the map}
\CN[\rho_a]:=\tr_{\bar A}\bigl[J(\rho_a\ox\sigma_{\bar a})J^\dagger\bigr] ={\bigl( J(\rho_a\ox\sigma_{\bar a})J^\dagger \bigr)}_A,
\end{equation}
for all states $\rho_a\in S(\HS_a)$, where $\sigma_{\bar a}$ is some fixed full-rank state.
If we also choose a full-rank $\sigma_a\in S(\HS_a)$ then we can use \cref{twirledPetz} to obtain a recovery channel $\CR=\CR_{\sigma_a,\CN}$ such that, for all $\rho_a\in S(\HS_a)$,
\begin{equation*}
-2\log F(\rho_a,\CR\circ\CN[\rho_a])\leq\big\lvert D(\rho_a \| \sigma_a) - D(\CN[\rho_a] \| \CN[\sigma_a]) \big\rvert.
\end{equation*}
However, by \cref{JLMSsubsystem}, we have
\begin{equation*}
\big\lvert D(\rho_a \| \sigma_a) - D(\CN[\rho_a] \| \CN[\sigma_a]) \big\rvert \leq\epsilon,
\end{equation*}
and therefore we conclude that the recovery channel $\CR$ works with high fidelity (cf.~\cite[Corollary 6.1]{Junge2015}).
By one of the Fuchs-van de Graaf inequalities~\cite{FuchsvandeGraaf}, this implies that
\begin{equation*}
\big\lVert \rho_a - \CR[\CN[\rho_a]] \big\rVert_1 \leq2\sqrt{\epsilon}:=\delta_1
\end{equation*}
for all $\rho_{a} \in S(\HS_a)$.
We now show that the channel $\CR$ recovers the reduced state on the entanglement wedge for arbitrary code states $\rho$, not just for those of the form $\rho = \rho_a \ot \sigma_{\bar a}$:
\begin{align*}
	&\onenorm{\CN[\rho_a] - {(J\rho J^\dagger)}_A}^2 =
	\onenorm{{\bigl(J(\rho_a\ox\sigma_{\bar{a}})J^\dagger\bigr)}_A - {(J\rho J^\dagger)}_A}^2\\
	&\leq 2 \ln 2 \, D\Bigl({\bigl(J (\rho_a\ox\sigma_{\bar{a}}) J^\dagger\bigr)}_A \Vert\, {(J \rho J^\dagger)}_A\Bigr)\\
	&\leq (2 \ln 2) \epsilon =: \delta_2^2,
\end{align*}
where the first inequality is Pinsker's inequality and the second inequality is \cref{JLMSsubsystem}, with one state set to $\rho$ and the other set to $\rho_a\ox\sigma_{\bar{a}}$.
Therefore, we obtain that, for all $\rho\in S(\HS_\text{code})$,
\begin{align}
&\quad\;\onenorm{\rho_a - \CR[{(J\rho J^\dagger)}_A]}\notag\\
&\leq \onenorm{\rho_a - \CR[\CN[\rho_a]]} + \onenorm{\CR[\CN[\rho_a]] - \CR[{(J\rho J^\dagger)}_A]} \notag \\
&\leq \onenorm{\rho_a - \CR[\CN[\rho_a]]} + \onenorm{\CN[\rho_a] - {(J\rho J^\dagger)}_A } \notag \\
&\leq \delta_1 + \delta_2 =: \delta.\label{RrecoversJ}
\end{align}
Thus, we conclude that $\CR$ recovers arbitrary bulk states in the entanglement wedge with high fidelity, as desired.
We now show that the adjoint of the map $\CR$ solves the entanglement wedge reconstruction problem in the form of \cref{bulkboundaryexpectationvalues}.
Given a bulk operator $\phi_a$ supported in entanglement wedge $a$ of $A$, define $\CO_A=\mathcal R^*[\phi_a]$.
Then we have that, for all $\rho\in S(\HS_\text{code})$,
\begin{align*}
&\quad\; \bigl\lvert \expect{\CO_A}_{J\rho J^\dagger} - \expect{\phi_a}_\rho \bigr\rvert \\
&= \bigl\lvert \tr \mathcal R^*[\phi_a] {(J\rho J^\dagger)}_A - \tr \phi_a \rho_a \bigr\rvert \\
&= \bigl\lvert \tr \phi_a \mathcal R[{(J\rho J^\dagger)}_A] - \tr \phi_a \rho_a \bigr\rvert \\
&= \bigl\lvert \tr \phi_a (\mathcal R[{(J\rho J^\dagger)}_A] - \rho_a) \bigr\rvert \\
&\leq \bigl\lVert \mathcal R[{(J\rho J^\dagger)}_A] - \rho_a \bigr\rVert_1 \, \bigl\lVert \phi_a \bigr\rVert \leq \delta \bigl\lVert \phi_a\bigr\rVert,
\end{align*}
where the first inequality is H\"older's inequality, and the second is \cref{RrecoversJ}.
This is the desired approximate equality of one-point functions.

\section{Correlation functions}
Given a set of $n$ bulk operators $\{ \phi^{(i)}_a \}$ acting on the entanglement wedge~$a$ of~$A$, our result implies that we can calculate their $n$-point correlation function as the expectation value of the boundary operator $O_A = \CR^*[\prod_i \phi^{(i)}_a]$ obtained by reconstructing the composite bulk operator $\prod_i \phi^{(i)}_a$.
However, one might hope that the reconstructed operators approximately reproduce the bulk algebra in the sense that
\begin{equation}\label{RecoveryHomo}
\Expect{\prod_i \phi_a^{(i)}}_\rho\approx \Expect{\prod_i \CO_A^{(i)}}_{J\rho J^\dagger},
\end{equation}
where $\CO_A^{(i)} = \mathcal R^*(\phi_a^{(i)})$.
When the bulk and boundary relative entropies are exactly equal, it is known that $\mathcal{R}^*$ is an algebra homomorphism~\cite[Proposition~8.4]{Ohya2004}, so that~\cref{RecoveryHomo} holds with equality.
We prove in \cref{thm:algebraic entanglement wedge} in the appendix that when the bulk and boundary relative entropies are only approximately equal, as in \cref{JLMSsubsystem}, then \cref{RecoveryHomo} still holds approximately, although the size of the error may grow quadratically with~$n$.

Furthermore, we show in \cref{corr:multientwedge} that this continues to be true even when different operators are reconstructed using the entanglement wedges~$a_i$ of different boundary regions~$A_i$:
\begin{equation}\label{multientwedgecorrelation}
\Expect{\prod_i \phi_{a_i}^{(i)}}_\rho\approx \Expect{\prod_i \CO_{A_i}^{(i)}}_{J\rho J^\dagger},
\end{equation}
where $\CO_{A_i}^{(i)} = \mathcal R_{A_i}^*[\phi_{a_i}^{(i)}]$ and $\mathcal R_{A_i}$ is the recovery map for boundary region~$A_i$.  This is illustrated in \cref{multientwedgefigure}.
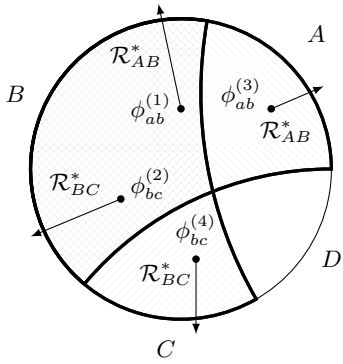
\begin{figure}[!t]
\centering
\begin{tikzpicture}
	\drawRT{0}{230}{2}{1}{1}{pattern=north west lines};
	\drawRT{80}{300}{2}{1}{1}{pattern=north east lines};
	\drawRT{0}{230}{2}{1}{0}{};
	\drawRT{80}{300}{2}{1}{0}{};
	\fill (0,0.8) circle (0.05);
	\draw (-0.4,0.8) node{$\phi^{(1)}_{ab}$};
	\fill (-0.8,-0.4) circle (0.05);
	\draw (-0.4,-0.2) node {$\phi^{(2)}_{bc}$};
	\fill (1.2,0.8) circle (0.05);
	\draw (0.8,1) node {$\phi^{(3)}_{ab}$};
	\fill (0.2,-1.2) circle (0.05);
	\draw (0.2,-0.8) node {$\phi^{(4)}_{bc}$};
	\draw[-latex] (0,0.8) -- (-0.3,2.2);
	\draw[-latex] (-0.8,-0.4) -- (-2,-0.9);
	\draw[-latex] (1.2,0.8) -- (1.9,1.1);
	\draw[-latex] (0.2,-1.2) -- (0.2,-2.2);
	\draw (-0.6,1.5) node {$\mathcal{R}^*_{AB}$};
	\draw (1.4,0.5) node {$\mathcal{R}^*_{AB}$};
	\draw (-1.4,-0.2) node {$\mathcal{R}^*_{BC}$};
	\draw (-0.2,-1.4) node {$\mathcal{R}^*_{BC}$};
	\draw (1.8,1.8) node {$A$};
	\draw (-2.2,1) node {$B$};
	\draw (-0.2,-2.4) node {$C$};
	\draw (2,-1.2) node {$D$};

	\draw (0,0) circle (2); 
	\pgfresetboundingbox
	\path[use as bounding box] (-2.5,-2.7) rectangle (2.3,2.3);
\end{tikzpicture}
\caption{We show in \cref{multientwedgecorrelation} that correlation functions of bulk operators can be computed by pushing each bulk operator to the boundary separately, and computing the expectation value in the boundary theory.  The bulk operators need not live in the same entanglement wedge.  In this figure, the boundary is decomposed into four regions: $A$, $B$, $C$, and $D$.  Regions $AB$ and $BC$ have a non-trivial intersection, and the bulk operators $\phi$ are localized to the regions as shown.  We can use the recovery maps $\mathcal{R}_{AB}^*$ and $\mathcal{R}_{BC}^*$ to push the operators to $AB$ and $BC$, respectively.}\label{multientwedgefigure}
\end{figure}

\section{An explicit formula}

Our ultimate goal is an explicit formula for approximate entanglement wedge reconstruction.
Thus we would like to calculate $\CO_A=\mathcal R^*[\phi_a]$ explicitly by using \cref{twirledPetz}.
Recall that the recovery channel $\CR$ depends on our choice of $\sigma_a$ and, through the channel $\CN$ from \cref{eq:the map}, also on the choice of $\sigma_{\bar a}$.
The result is particularly satisfying when both $\sigma_a$ and $\sigma_{\bar a}$ are chosen to be maximally mixed.
It is important to emphasize that this is just a convenient choice that we make in order to simplify our expressions.
For an infinite-dimensional code space, such a choice would not be well-defined, and instead another choice of average code state could be used.
With this simplification, we find that, for all bulk operators $\phi_a$ with support in the entanglement wedge $a$,
\begin{align}\label{ExplicitRecoveryHilbert}
&\qquad \CO_A := \CR^*[\phi_a]=\\
&\frac{1}{d_\text{code}}\!\int_{\RR}\!\!\!dt\, \beta_0(t) e^{\frac{1}{2}(1-it)H_A}
\tr_{\bar{A}}\left[J(\phi_a \ox \id_{\bar{a}}) J^\dagger\right]e^{\frac{1}{2}(1+it)H_A},\notag
\end{align}
where $H_A=-\log {(J\tau J^\dagger)}_A$ is the boundary modular Hamiltonian on subregion $A$ associated with the maximally mixed state $\tau$ on the code subspace.
The above expression makes it clear that the natural basis one should use for entanglement wedge reconstruction is the eigenbasis of the modular Hamiltonian.
What is more, the recovery channel can be expressed in the form of a logarithmic directional derivative, as in \cref{qBayesDerivative}:
\begin{equation}\label{LogDerivExplicit}
\CO_A = \mathcal R^*[\phi_a]
= - \frac{1}{d_{\text{code}}} \left. \frac{d}{dt}\right|_{t=0} H_A[\tau_{\text{code}} + t \, \phi_a \otimes \id_{\bar a}]
\end{equation}
where we write $H_A[\rho] := -\log {(J \rho J^\dagger)}_A$ for the boundary modular Hamiltonian on subregion $A$ associated with a bulk state $\rho$.
In other words, the boundary operator corresponding to $\phi_a$ can be computed as the response in the boundary modular Hamiltonian $H_A$ to a perturbation of the maximally mixed code state in the direction of the operator $\phi_a$.

\Cref{ExplicitRecoveryHilbert,LogDerivExplicit} are explicit expressions for entanglement wedge reconstruction that are meaningful even when the bulk and boundary relative entropies are not exactly equal.
When the relative entropies are exactly equal for all $\rho$, they reduce to the simple Petz map, which is equivalent to existing notions of operator algebra quantum error correction, as applied to bulk reconstruction.

In \cref{ExplicitRecoveryHilbert}, we first map our bulk operator $\phi_a$ to the \textit{entire} boundary via $\phi_a \mapsto J (\phi_a\ot\id_{\bar a}) J^\dagger$.
One might wonder what the connection is between this mapping and the global HKLL reconstruction procedure~\cite{HKLL}, discussed in the preliminaries, which likewise produces an operator $\CO_{\text{HKLL}}$ supported on the full CFT Hilbert space and satisfying $\expect{\phi_a}_\rho = \expect{\CO_{\text{HKLL}}}_{J\rho J^\dagger}$.
The latter conditions means that $J^\dagger \CO_{\text{HKLL}} J = \phi_a \ot \id_{\bar a}$. Hence:
\begin{equation*}
	J (\phi_a \ot \id_{\bar a}) J^\dagger = JJ^\dagger \CO_{\text{HKLL}} JJ^\dagger.
\end{equation*}
The map $JJ^\dagger$ is precisely the projection onto the code subspace.
This means that, in order to compute the term $J (\phi_a \ot \id_{\bar a}) J^\dagger$ in our formula~\eqref{ExplicitRecoveryHilbert}, we can leverage the global HKLL procedure to map out the bulk operator to a global boundary operator, if we make sure to subsequently project the result onto the code subspace.
Such a projection onto the code subspace is not necessarily complicated.

In the appendix, we work through an explicit calculation involving our reconstruction formula.  The example is analogous to Rindler wedge reconstruction, although it is only strictly true for free fields.
We consider a bulk operator $\phi_a$ in AdS$_3$, localized in the entanglement wedge of a boundary interval, and we choose a two-dimensional code space spanned by states $\ket{\widetilde 0}$ and $\phi_a \ket{\widetilde 0}$, where $\ket{\widetilde 0}$ is the vacuum state.
Using only our recovery formula and global HKLL, we find an expression for reconstructed bulk operators as a mode expansion supported on the Rindler wedge.

\section{Discussion}

The mapping of bulk operators to boundary subregions was recognized as a problem in operator algebra quantum error correction in~\cite{Almheiri2014}.
This important conceptual advance, along with the insight that bulk and boundary relative entropies agree~\cite{JLMS2016}, imply that any low-energy bulk operator in the entanglement wedge of a boundary region should be representable as an operator acting on that boundary region~\cite{Dong2016,JLMS2016}. In this sense, a boundary region is dual to its entanglement wedge.

In this article, we used recent advances in quantum information theory to provide a robust demonstration of this result that does not assume that the bulk and boundary relative entropies are exactly equal.
In addition, we found a satisfyingly simple, explicit formula for the boundary operator: namely, that it can be computed as the response of the boundary modular Hamiltonian of the subregion to a perturbation of the average code state in the direction of the bulk operator.

Our argument did not rely on the structural consequences implied by exact equality of relative entropies, assumed in the proof of the entanglement wedge reconstruction conjecture in~\cite{Dong2016}.
That said, the argument presented here still assumes the finite dimensionality of the associated von Neumann algebras.
Most of our expressions can be applied formally even in the infinite-dimensional setting, but it would a worthwhile project to try to rigorously extend our results to the infinite-dimensional case.
It seems likely that additional hypotheses will be required in order to ensure the existence of the local channel that was instrumental in our argument.

In independent work~\cite{Faulkner2017}, Faulkner and Lewkowycz have arrived at a formula for entanglement wedge reconstruction which also involves modular flow. Their approach builds on the insights of~\cite{JLMS2016} and uses the free field physics of the bulk to argue that entanglement wedge reconstruction involves integrating the modular flow against a certain (generally unknown) kernel. It would interesting to understand their results from the perspective of our framework; most likely this requires further exploring the consequences of the free field assumption in the bulk which we do not a priori need to assume in our approach. 

\smallskip

\begin{acknowledgments}
\textbf{Acknowledgments.} We thank Mario Berta, Tom Faulkner, Daniel Harlow, Eliot Hijano, Aitor Lewkowycz, Sepehr Nezami, Jonathan Oppenheim, David Sutter, Mark Van Raamsdonk, and Mark Wilde for helpful discussions and feedback.
BGS is supported by the Simons Foundation's It From Qubit collaboration; through a Simons Investigator Award to Senthil Todadri; and by MURI grant W911NF-14-1-0003 from ARO\@.
GS is supported by an NSERC postgraduate scholarship.
JC is supported by the Fannie and John Hertz Foundation and the Stanford Grad Fellowship program.
MW and PH gratefully acknowledge support from the Simons Foundation's Investigator program and It from Qubit collaboration, as well as AFOSR grant FA9550-16-1-0082.
PH is also supported by CIFAR\@.
MW also acknowledges financial support by the NWO through Veni grant no.~680-47-459.
\end{acknowledgments}
\bibliography{BulkRecovery}
\cleardoublepage\onecolumngrid%
\section{Appendix}
\appendix

\section{Finite-Dimensional Von Neumann Algebras}
A finite-dimensional \emph{von Neumann algebra} is a (unital) subalgebra $\mathcal A\subseteq B(\mathcal H)$ of the linear operators on some finite-dimensional Hilbert space $\mathcal H$.
We define the set of \emph{states} as the intersection $S(\mathcal A) = \mathcal A \cap S(\mathcal H)$ of the algebra with the set of all density operators on the Hilbert space $\mathcal H$, denoted by $S(\mathcal H)$. We denote the larger space of positive linear functionals (with no normalization condition) on $\mathcal A$ by $P(\mathcal A)$.
We write $\expect{\phi}_\rho := \tr \rho\phi$ for the expectation value of an operator $\phi\in\mathcal A$ in state $\rho\in S(\mathcal A)$.
(To any state $\rho\in S(\mathcal A)$ we may assign the positive normalized linear functional $\phi \mapsto \expect\phi_\rho$, thereby connecting $S(\mathcal A)$ with the standard definition of states on a von Neumann algebra.)

In this way, we may lift standard definitions in quantum information theory to finite-dimensional von Neumann algebras.
For example, the \emph{trace norm difference} $\lVert\rho-\sigma\rVert_1$, the \emph{relative entropy} $D(\rho\Vert\sigma)$ and the \emph{fidelity} $F(\rho,\sigma)$ for $\rho,\sigma\in S(\mathcal A)$ can be defined in the usual way by $\lVert\rho-\sigma\rVert_1 := \tr\lvert\rho-\sigma\rvert$, $D(\rho\Vert\sigma):=\tr[\rho\log\rho-\rho\log\sigma]$, and $F(\rho,\sigma):=\lVert\sqrt\rho\sqrt\sigma\rVert_1$, respectively, in agreement with their abstract definitions for von Neumann algebras.
For a completely positive map $\mathcal N\colon P(\mathcal A)\to P(\mathcal B)$, the \emph{adjoint} or \emph{dual} channel $\mathcal N^*\colon \mathcal B\to\mathcal A$ is defined by demanding that $\expect{\phi}_{\mathcal N[\rho]} = \expect{\mathcal N^*[\phi]}_\rho$ for all $\rho\in S(\mathcal A)$ and $\phi\in\mathcal B$.
If $\mathcal N$ is trace-preserving then it is called a \emph{quantum channel} (equivalently, the dual map is unital, \emph{i.e.}, $\mathcal N^*[\id]=\id$).
It is called a \emph{quantum operation} if $\mathcal N$ is merely trace non-increasing (equivalently, $\mathcal N^*[\id]\leq\id$).

Consider a (unital) subalgebra $\mathcal A\subseteq \mathcal B$.
For any state $\rho\in S(\mathcal B)$, we define its \emph{restriction} $\rho\vert_{\mathcal A}$ as the unique element in $S(\mathcal A)$ such that $\expect{\phi}_{\rho\vert_{\mathcal A}} = \expect\phi_\rho$ for all $\phi\in\mathcal A$; the assignment $\rho\mapsto\rho\vert_{\mathcal A}$ defines a quantum channel.
The inclusion map $\mathcal E_{\mathcal A}\colon S(\mathcal A)\subseteq S(\mathcal B)$ is likewise a quantum channel, sometimes referred to as a \emph{state extension}.
(In the language of von Neumann algebras, it is the predual of a conditional expectation onto $\mathcal A$.)
Importantly,
\begin{equation}\label{eq:state extension}
	\mathcal E_{\mathcal A}[\rho]\big\vert_{\mathcal A}=\rho
\end{equation}
for all $\rho\in S(\mathcal A)$.
Since $\mathcal E_{\mathcal A}$ is just the inclusion map, it is immediate that
\begin{equation}\label{eq:same relative entropies and fidelities}
  D(\rho\Vert\sigma)=D(\mathcal E_{\mathcal A}[\rho]\Vert\mathcal E_{\mathcal A}[\sigma])
  \quad\text{and}\quad
  F(\rho, \sigma)=F(\mathcal E_{\mathcal A}[\rho], \mathcal E_{\mathcal A}[\sigma])
\end{equation}
for all $\rho,\sigma\in S(\mathcal A)$.

Lastly, we show that there is a natural generalization of Stinespring's dilation theorem to quantum operations on finite-dimensional von Neumann algebras.
\begin{lem}\label{lem:stinespring}
Let $\mathcal N\colon P(\mathcal A)\to P(\mathcal B)$ be a quantum operation on finite-dimensional von Neumann algebras with $\mathcal A \subseteq B(\mathcal H_A)$ and $\mathcal B \subseteq B(\mathcal H_B)$. Then:
\begin{align}
\mathcal N (\rho) = [V \mathcal E_A (\rho) V^\dagger] \big\vert_{\mathcal B}
\end{align}
where $V: \mathcal H_A \to \mathcal H_B \otimes \mathcal H_E$ (for some auxiliary Hilbert space $\mathcal H_E$) and $V^\dagger V \leq \id$.
\end{lem}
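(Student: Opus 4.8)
The plan is to reduce the statement to the ordinary Stinespring dilation theorem by passing to the ambient Hilbert spaces. First I would factor the given quantum operation $\mathcal N\colon P(\mathcal A)\to P(\mathcal B)$ through full matrix algebras. Concretely, consider the composition $\widetilde{\mathcal N} := (\cdot)\vert_{\mathcal B}\circ\,\mathcal M\circ\mathcal E_{\mathcal A}$, where I still need to produce a completely positive, trace-non-increasing map $\mathcal M\colon P(B(\mathcal H_A))\to P(B(\mathcal H_B))$ whose ``sandwich'' by the inclusion $\mathcal E_{\mathcal A}$ and the restriction to $\mathcal B$ reproduces $\mathcal N$. The natural candidate is $\mathcal M := \mathcal E_{\mathcal B}\circ\mathcal N\circ(\cdot)\vert_{\mathcal A}$, i.e. first restrict a density operator on $\mathcal H_A$ to $\mathcal A$, apply $\mathcal N$, then include the result back into $S(\mathcal H_B)$ via the state extension $\mathcal E_{\mathcal B}$. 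This $\mathcal M$ is a composition of quantum operations (restriction and $\mathcal E_{\mathcal B}$ are channels, $\mathcal N$ is trace-non-increasing), hence itself a completely positive, trace-non-increasing map between the full operator algebras $B(\mathcal H_A)$ and $B(\mathcal H_B)$.

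Next I would apply the standard (finite-dimensional) Stinespring dilation to $\mathcal M$: since $\mathcal M$ is completely positive and trace-non-increasing on $B(\mathcal H_A)$, there is an auxiliary Hilbert space $\mathcal H_E$ and an operator $V\colon\mathcal H_A\to\mathcal H_B\ot\mathcal H_E$ with $V^\dagger V\leq\id$ such that $\mathcal M[\tau] = \tr_{\mathcal H_E}[V\tau V^\dagger]$ for all $\tau\in S(\mathcal H_A)$; here $\tr_{\mathcal H_E}[V(\cdot)V^\dagger] = [V(\cdot)V^\dagger]\vert_{B(\mathcal H_B)}$ is exactly the restriction to the subalgebra $B(\mathcal H_B)\ot\id_E\subseteq B(\mathcal H_B\ot\mathcal H_E)$, which composed with the further restriction to $\mathcal B$ is just $[V(\cdot)V^\dagger]\vert_{\mathcal B}$. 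Then I would check the two algebraic identities that make everything collapse correctly. On the input side, $\mathcal E_{\mathcal A}[\rho]\vert_{\mathcal A} = \rho$ for $\rho\in S(\mathcal A)$ by \cref{eq:state extension}, so $\mathcal M[\mathcal E_{\mathcal A}[\rho]] = \mathcal E_{\mathcal B}[\mathcal N[\rho]]$. On the output side, $\mathcal E_{\mathcal B}[\mathcal N[\rho]]\vert_{\mathcal B} = \mathcal N[\rho]$, again by \cref{eq:state extension}. Combining, $[V\mathcal E_{\mathcal A}[\rho]V^\dagger]\vert_{\mathcal B} = \mathcal M[\mathcal E_{\mathcal A}[\rho]]\vert_{\mathcal B} = \mathcal E_{\mathcal B}[\mathcal N[\rho]]\vert_{\mathcal B} = \mathcal N[\rho]$, which is the claimed formula.

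The main obstacle I anticipate is bookkeeping about which restriction map is meant at each stage, and making sure the trace-non-increasing property is preserved through the composition rather than accidentally assuming trace preservation. In particular, I want $\mathcal E_{\mathcal B}$ to be a genuine quantum channel — this is the content of the remark in the excerpt that it is the predual of a conditional expectation — and I want the restriction $(\cdot)\vert_{\mathcal A}$ to be trace-preserving as well; both are standard but should be invoked explicitly so that $\mathcal M$ inherits $\mathcal M^*[\id]\leq\id$ from $\mathcal N^*[\id]\leq\id$. A secondary point is to confirm that the partial trace $\tr_{\mathcal H_E}$ appearing in the ordinary Stinespring theorem is literally the restriction to $\mathcal B\subseteq B(\mathcal H_B\ot\mathcal H_E)$ in the notation of this appendix, so that the statement matches verbatim; this is immediate once one notes $\mathcal B\subseteq B(\mathcal H_B)\cong B(\mathcal H_B)\ot\id_E$. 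With those checks in place the argument is short, and no estimate or limiting procedure is needed since everything is finite-dimensional.
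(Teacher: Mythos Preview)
Your proposal is correct and is essentially identical to the paper's proof: your map $\mathcal M$ is exactly the paper's $\tilde{\mathcal N}$, and both arguments apply the ordinary Stinespring dilation to this Hilbert-space-level map and then use \cref{eq:state extension} on both ends to recover $\mathcal N$. The only difference is that you spell out the bookkeeping (trace-non-increase, identification of $\tr_{\mathcal H_E}$ with restriction) more explicitly than the paper does.
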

\begin{proof}
We define $$\tilde{\mathcal N}: P(\mathcal H_A) \to P(\mathcal H_B), \,\,\,\,\omega \to \mathcal E_{\mathcal B} [\mathcal N[\omega\big\vert_{\mathcal A}]].$$ This is a quantum operation between states on Hilbert spaces and hence has a Stinespring dilation $V: \mathcal H_A \to \mathcal H_B \otimes \mathcal H_E$ such that $\tilde{\mathcal N} = \tr_E V(\cdot) V^\dagger$. However, it follows from \eqref{eq:state extension} that
\begin{align}
\mathcal N = [\tilde{\mathcal N} \circ \mathcal E_{\mathcal A} (\cdot)] \big\vert_{\mathcal B},
\end{align}
and thus we obtain the lemma.
\end{proof}
We refer to~\cite{Ohya2004,petz2007quantum,Harlow2016} for more detailed expositions of the theory of finite-dimensional von Neumann algebras.
\section{Approximate Recovery Maps for Von Neumann Algebras}

We now extend the universal recovery result of~\cite{Junge2015} to finite-dimensional von Neumann algebras (c.f., \cref{JungeMonotonicity} herein).

\begin{lem}\label{lem:universal recovery}
Let $\mathcal N\colon S(\mathcal A)\to S(\mathcal B)$ be a quantum channel of finite-dimensional von Neumann algebras and $\rho,\sigma\in S(\mathcal A)$ states such that $\supp\rho\subseteq\supp\sigma$. Then,
\[ D(\rho\Vert\sigma) - D(\mathcal N[\rho]\Vert\mathcal N[\sigma]) \geq -2\log F(\rho, (\mathcal R_{\sigma,\mathcal N} \circ \mathcal N)[\rho]), \]
where
\begin{equation}\label{eq:universal recovery map vN}
  \mathcal R_{\sigma,\mathcal N}[\gamma]
:= \int dt \, \beta_0(t) \, \sigma^{-it/2} \mathcal P_{\sigma,\mathcal N}\bigl[{\mathcal N[\sigma]}^{it/2} \,\gamma\, {\mathcal N[\sigma]}^{-it/2}\bigr] \sigma^{it/2}
\end{equation}
is a quantum operation defined in terms of the Petz recovery map
\[ \mathcal P_{\sigma,\mathcal N}[\gamma] = \sigma^{1/2} \mathcal N^*\bigl[{\mathcal N[\sigma]}^{-1/2} \,\gamma\, {\mathcal N[\sigma]}^{-1/2}\bigr] \sigma^{1/2} \]
and the probability distribution $\beta_0(t) = \frac\pi2{(\cosh(\pi t)+1)}^{-1}$.
\end{lem}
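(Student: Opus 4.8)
The plan is to reduce the statement to the Hilbert-space version of the universal recovery theorem from~\cite{Junge2015} (stated as \cref{JungeMonotonicity} herein) by dilating everything through the ambient $B(\mathcal H_A)$ and $B(\mathcal H_B)$, using the tools developed in the previous section. First I would apply \cref{lem:stinespring} to write $\mathcal N(\cdot) = [\tilde{\mathcal N}\circ\mathcal E_{\mathcal A}(\cdot)]\big\vert_{\mathcal B}$, where $\tilde{\mathcal N}\colon S(\mathcal H_A)\to S(\mathcal H_B)$ is an honest quantum channel between full matrix algebras (trace preservation of $\mathcal N$ forces $\tilde{\mathcal N}$ to be trace preserving, since $\mathcal E$ and restriction both preserve trace). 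The key observation is that relative entropy and fidelity are unchanged under the inclusions $\mathcal E_{\mathcal A}$ and $\mathcal E_{\mathcal B}$ by~\eqref{eq:same relative entropies and fidelities}, and restriction can only decrease relative entropy (monotonicity) while leaving it unchanged when composed after $\mathcal E$ by~\eqref{eq:state extension}. So the left-hand side $D(\rho\Vert\sigma) - D(\mathcal N[\rho]\Vert\mathcal N[\sigma])$ can be bounded below by the corresponding Hilbert-space quantity $D(\mathcal E_{\mathcal A}[\rho]\Vert\mathcal E_{\mathcal A}[\sigma]) - D(\tilde{\mathcal N}[\mathcal E_{\mathcal A}[\rho]]\Vert\tilde{\mathcal N}[\mathcal E_{\mathcal A}[\sigma]])$, after checking that passing from $\tilde{\mathcal N}[\mathcal E_{\mathcal A}[\cdot]]$ down to $\mathcal B$ only helps the inequality.

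Next I would invoke the Hilbert-space result~\cite{Junge2015}: there is a twirled Petz recovery channel $\tilde{\mathcal R}_{\mathcal E_{\mathcal A}[\sigma],\tilde{\mathcal N}}$ such that $D(\mathcal E_{\mathcal A}[\rho]\Vert\mathcal E_{\mathcal A}[\sigma]) - D(\tilde{\mathcal N}[\mathcal E_{\mathcal A}[\rho]]\Vert\tilde{\mathcal N}[\mathcal E_{\mathcal A}[\sigma]]) \geq -2\log F(\mathcal E_{\mathcal A}[\rho],\,\tilde{\mathcal R}\circ\tilde{\mathcal N}[\mathcal E_{\mathcal A}[\rho]])$. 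The bulk of the work is then to massage the composite map $\rho \mapsto [\tilde{\mathcal R}\circ\mathcal E_{\mathcal B}\circ\mathcal N(\rho)]\big\vert_{\mathcal A}$ — or rather, to verify that restricting $\tilde{\mathcal R}$'s output to $\mathcal A$ and precomposing appropriately yields exactly the operation $\mathcal R_{\sigma,\mathcal N}$ written in~\eqref{eq:universal recovery map vN}. Here one uses that $\mathcal N^*$ (the dual in the von Neumann sense, defined earlier in the appendix) is related to $\tilde{\mathcal N}^*$ through the conditional-expectation structure, that $\mathcal N[\sigma]\in\mathcal B$ so its powers ${\mathcal N[\sigma]}^{it/2}$ live in $\mathcal B$ and commute past the inclusion $\mathcal E_{\mathcal B}$, and similarly $\sigma\in\mathcal A$ so $\sigma^{it/2}$ commutes past $\mathcal E_{\mathcal A}$ and past restriction to $\mathcal A$. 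Because the twirl is an integral over $t$ against the fixed probability density $\beta_0$, and the integrand at each $t$ is a positive map, the identification holds term by term under the integral sign; the resulting $\mathcal R_{\sigma,\mathcal N}$ is a quantum operation (trace non-increasing, possibly not trace preserving because $\mathcal E_{\mathcal A}$ involves a conditional expectation and $V^\dagger V\le\id$), matching the claim.

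Finally, I would close the fidelity estimate: $F(\mathcal E_{\mathcal A}[\rho],\,\tilde{\mathcal R}\circ\tilde{\mathcal N}[\mathcal E_{\mathcal A}[\rho]]) \leq F(\rho,\,(\mathcal R_{\sigma,\mathcal N}\circ\mathcal N)[\rho])$, using~\eqref{eq:same relative entropies and fidelities} to strip $\mathcal E_{\mathcal A}$ from the first argument and monotonicity of fidelity under the restriction channel $\cdot\big\vert_{\mathcal A}$ applied to both arguments — one must check that the restriction of $\tilde{\mathcal R}\circ\tilde{\mathcal N}[\mathcal E_{\mathcal A}[\rho]]$ to $\mathcal A$ agrees with $(\mathcal R_{\sigma,\mathcal N}\circ\mathcal N)[\rho]$, which is precisely what the previous paragraph established. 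Chaining the inequalities gives the lemma. I expect the main obstacle to be the bookkeeping in the middle step: carefully tracking how the Petz map's ingredients ($\sigma^{\pm 1/2}$, ${\mathcal N[\sigma]}^{\pm 1/2}$, and the adjoints) transform under the dilation, and confirming that the von-Neumann-algebra Petz map $\mathcal P_{\sigma,\mathcal N}$ is genuinely the $\mathcal A$-restriction of the Hilbert-space Petz map $\mathcal P_{\mathcal E_{\mathcal A}[\sigma],\tilde{\mathcal N}}$ rather than merely resembling it — this requires using~\eqref{eq:state extension} and the defining property of $\mathcal N^*$ more than once, and handling the subtlety that $\sigma$ need not be full rank on $\mathcal H_A$ (only $\supp\rho\subseteq\supp\sigma$ within $\mathcal A$), so inverses are taken on supports throughout.
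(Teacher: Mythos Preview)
Your proposal is correct and follows essentially the same approach as the paper: reduce to the Hilbert-space result of~\cite{Junge2015} by lifting $\rho,\sigma$ through $\mathcal E_{\mathcal A}$, applying the recovery inequality to $\tilde{\mathcal N}$, and then verifying that the Hilbert-space twirled Petz map intertwines with the algebraic one via $\mathcal R_{\mathcal E_{\mathcal A}[\sigma],\tilde{\mathcal N}} \circ \mathcal E_{\mathcal B} = \mathcal E_{\mathcal A} \circ \mathcal R_{\sigma,\mathcal N}$. The only cosmetic difference is that the paper obtains \emph{equalities} throughout (both relative entropies and the fidelity match exactly after the identification, since $\tilde{\mathcal N}[\mathcal E_{\mathcal A}[\rho]] = \mathcal E_{\mathcal B}[\mathcal N[\rho]]$ and~\eqref{eq:same relative entropies and fidelities} applies directly), whereas you hedge with monotonicity inequalities; your chain still closes, but you can tighten it.
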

\begin{proof}
  By assumption, $\mathcal A\subseteq B(\mathcal H_A)$ and $\mathcal B\subseteq B(\mathcal H_B)$ for finite-dimensional Hilbert spaces $\mathcal H_A,\mathcal H_B$.
  We denote by $\mathcal E_{\mathcal A}$ and $\mathcal E_{\mathcal B}$ the corresponding state extension maps, defined above.
  As in \cref{lem:stinespring}, we now consider
  \[ \widetilde{\mathcal N}\colon S(\mathcal H_A)\to S(\mathcal H_B), \quad \omega \mapsto \mathcal E_{\mathcal B}[\mathcal N[\omega\rvert_{\mathcal A}]]. \]
  This is a quantum channel between density operators on Hilbert spaces and hence~\cite[Theorem 2.1]{Junge2015} is applicable.
  It states that
	\[ D(\omega\Vert\chi) - D(\widetilde{\mathcal N}[\omega]\Vert\widetilde{\mathcal N}[\chi]) \geq -2\log F(\omega, (\mathcal R_{\chi,\widetilde{\mathcal N}} \circ \widetilde{\mathcal N})[\omega]) \]
	for any pair of density operators $\omega,\chi\in S(\mathcal H)$ such that $\supp\omega\subseteq\supp\chi$.
	We now make the choice $\omega=\mathcal E_A[\rho]$ and $\chi=\mathcal E_A[\sigma]$.
	Then, using \cref{eq:state extension,eq:same relative entropies and fidelities}, it follows that
	$D(\omega\Vert\chi) = D(\rho\Vert\sigma)$,
	$D(\widetilde{\mathcal N}[\omega]\Vert\widetilde{\mathcal N}[\chi]) = D(\mathcal N[\rho]\Vert\mathcal N[\sigma])$,
	$\mathcal R_{\chi,\widetilde{\mathcal N}} \circ \mathcal E_{\mathcal B} = \mathcal E_A \circ \mathcal R_{\sigma,\mathcal N}$, and hence
	$F(\omega, (\mathcal R_{\chi,\widetilde{\mathcal N}} \circ \widetilde{\mathcal N})[\omega]) = F(\rho, (\mathcal R_{\sigma,\mathcal N} \circ \mathcal N)[\rho])$.
	The lemma follows immediately.
\end{proof}

If a quantum channel $\mathcal N\colon S(\mathcal A)\to S(\mathcal B)$ is exactly reversible then it can be reversed by the Petz recovery map $\mathcal P=\mathcal P_{\sigma,\mathcal N}$ for any faithful state $\sigma$.
In this case, $\mathcal P^*$ is \emph{multiplicative} in the sense that $\mathcal P^*[\phi' \phi]=\mathcal P^*[\phi'] \mathcal P^*[\phi]$ (e.g.,~\cite[Proposition~8.4]{Ohya2004}).
If $\mathcal R$ is an arbitrary quantum operation that reverses $\mathcal N$ (\emph{i.e.}, $(\mathcal R \circ \mathcal N)[\rho]=\rho$ for all $\rho$) then it is still true that
$\Expect{\mathcal R^*[\phi_1] \mathcal R^*[\phi_2] \dots}_{\mathcal N[\rho]} = \Expect{\mathcal R^*[\phi_1 \phi_2 \dots]}_{\mathcal N[\rho]} = \Expect{\phi_1 \phi_2 \dots}_{\rho}$
for all $\rho\in S(\mathcal A)$, or, equivalently, that $\mathcal N^*[\mathcal R^*[\phi_1] \mathcal R^*[\phi_2] \dots] = \mathcal N^*[\mathcal R^*[\phi_1 \phi_2 \dots]]$.
In fact this is true even when a different (exact) recovery map is used for each operator $\phi_i$. We now prove that approximate reversibility implies approximate multiplicativity. Correlation functions are reconstructed up to an error that grows at most quadratically with $n$, even when different recovery maps $\mathcal R_i$, correcting different subalgebras $\mathcal A_i$ of the original algebra $\mathcal A$, are used for each operator $\phi_i$.

\begin{thm}\label{thm:approx mul}
  Let $\mathcal N\colon P(\mathcal A)\to P(\mathcal B)$ and $\mathcal R_i \colon P(\mathcal B)\to P(\mathcal{A}_i)$ be quantum operations on finite-dimensional von Neumann algebras, with $\mathcal A_i \subseteq \mathcal A\,$, and $\epsilon>0$ such that, for each recovery channel $\mathcal R_i$, we have $\lVert \mathcal R_i \circ \mathcal N [\rho] - \rho|_{\mathcal A_i} \lVert_1 \leq \epsilon\,$ for all $\rho\in S(\mathcal A)$.
  Then:
  \begin{equation}\label{eq:approx mul}
    \bigl\lVert \mathcal N^*[\prod_{i=1}^{n}\mathcal R_i^*[\phi_i] ] - \prod_{i=1}^n \phi_i \bigr\rVert \leq \frac{1}{2} \epsilon \,n (3 n - 1) \prod_{i=1}^n \lVert\phi_i\rVert
  \end{equation}
  for all $\phi_i \in \mathcal A_i$.
\end{thm}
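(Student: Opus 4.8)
The plan is to prove the case $n=2$ first and then bootstrap to general $n$ by an induction/telescoping argument. For $n=2$ the key observation is that $\mathcal{R}_i^*[\phi_i]$ is approximately equal to $\phi_i$ in the following sense: for any $\rho\in S(\mathcal A)$, $\lvert\tr(\rho(\mathcal N^*[\mathcal R_i^*[\phi_i]]-\phi_i))\rvert = \lvert\tr((\mathcal R_i\circ\mathcal N[\rho]-\rho|_{\mathcal A_i})\phi_i)\rvert\le\epsilon\lVert\phi_i\rVert$ by H\"older, so $\lVert\mathcal N^*[\mathcal R_i^*[\phi_i]]-\phi_i\rVert\le\epsilon\lVert\phi_i\rVert$. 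The difficulty is that $\mathcal N^*$ need not be multiplicative, so one cannot directly write $\mathcal N^*[\mathcal R_1^*[\phi_1]\mathcal R_2^*[\phi_2]]$ as $\mathcal N^*[\mathcal R_1^*[\phi_1]]\,\mathcal N^*[\mathcal R_2^*[\phi_2]]$. To get around this, the first step is to establish an \emph{operator-valued} approximate Cauchy--Schwarz/Kadison--Schwarz bound for the unital completely positive maps $\mathcal N^*$ and $\mathcal R_i^*$: for any $X$, $\mathcal N^*[X^\dagger X]\ge\mathcal N^*[X]^\dagger\mathcal N^*[X]$, and more usefully the ``defect'' $\mathcal N^*[X^\dagger X]-\mathcal N^*[X]^\dagger\mathcal N^*[X]$ is positive and its size is controlled.

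The central lemma I would prove is a \emph{perturbative multiplicativity} statement: if $\Psi$ is a unital completely positive map and $X,Y$ are operators with $\lVert\Psi[X^\dagger X]-X^\dagger X\rVert$ and $\lVert\Psi[Y^\dagger Y]-Y^\dagger Y\rVert$ and $\lVert\Psi[X]-X\rVert$, $\lVert\Psi[Y]-Y\rVert$ all small, then $\lVert\Psi[XY]-\Psi[X]\Psi[Y]\rVert$ is small, with the error expressed in terms of those quantities and the operator norms of $X,Y$. This is proved by a Stinespring dilation $\Psi[\,\cdot\,]=V^\dagger(\,\cdot\,)V$ (using \cref{lem:stinespring}, or rather its dual form), writing $\Psi[XY]-\Psi[X]\Psi[Y]=V^\dagger X(\mathbbm 1-VV^\dagger)YV$ and bounding $\lVert(\mathbbm 1-VV^\dagger)YV\rVert$ by noting that $\mathbbm 1-VV^\dagger$ is a projection and $\lVert(\mathbbm 1-VV^\dagger)YV\rVert^2=\lVert V^\dagger Y^\dagger(\mathbbm1-VV^\dagger)YV\rVert=\lVert\Psi[Y^\dagger Y]-\Psi[Y]^\dagger\Psi[Y]\rVert$; the latter is then bounded by $\lVert\Psi[Y^\dagger Y]-Y^\dagger Y\rVert+\lVert Y^\dagger Y-\Psi[Y]^\dagger\Psi[Y]\rVert$ and a triangle inequality involving $\lVert\Psi[Y]-Y\rVert$. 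I expect this Stinespring-defect estimate to be the main obstacle: one has to be careful that the hypothesis only controls $\mathcal R_i\circ\mathcal N$ applied to \emph{states} (via trace-norm closeness, i.e.\ in the predual), so the bounds on $\mathcal N^*[X^\dagger X]$ for $X=\mathcal R_i^*[\phi_i]$ must be derived from that, which means one needs $\phi_i^\dagger\phi_i$ and its recovery to be comparable — this works because $\mathcal R_i^*$ is itself unital CP, so $\mathcal R_i^*[\phi_i]^\dagger\mathcal R_i^*[\phi_i]\le\mathcal R_i^*[\phi_i^\dagger\phi_i]$, and then $\mathcal N^*[\mathcal R_i^*[\phi_i^\dagger\phi_i]]\approx\phi_i^\dagger\phi_i$ by the $n=1$ bound applied to the operator $\phi_i^\dagger\phi_i$ (whose norm is $\lVert\phi_i\rVert^2$).

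With the two-factor estimate in hand, the general case follows by a telescoping decomposition: write
\[
\mathcal N^*\bigl[\textstyle\prod_{i=1}^n\mathcal R_i^*[\phi_i]\bigr]-\prod_{i=1}^n\phi_i
= \sum_{k=1}^{n}\Bigl(\mathcal N^*\bigl[\textstyle\prod_{i=1}^{k}\mathcal R_i^*[\phi_i]\bigr]\prod_{j=k+1}^n\phi_j - \mathcal N^*\bigl[\textstyle\prod_{i=1}^{k-1}\mathcal R_i^*[\phi_i]\bigr]\prod_{j=k}^n\phi_j\Bigr),
\]
peeling off one factor at a time; each summand is handled by the two-factor defect bound (with $X=\prod_{i<k}\mathcal R_i^*[\phi_i]$, $Y=\mathcal R_k^*[\phi_k]$) together with the elementary fact that $\lVert\mathcal N^*[\prod_{i\le k}\mathcal R_i^*[\phi_i]]\rVert\le\prod_{i\le k}\lVert\phi_i\rVert$ since all maps are unital CP hence norm-contractive. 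Collecting the $n$ terms and summing the resulting arithmetic series $\sum_{k=1}^n(\text{linear in }k)$ yields the claimed $\tfrac12\epsilon\,n(3n-1)\prod_i\lVert\phi_i\rVert$; the precise constant is then just a matter of bookkeeping the contributions (one factor of roughly $\epsilon$ per peeled term from the $\Psi[X]\approx X$ input, plus the accumulated defect terms scaling like $k$), which I would carry out explicitly at the end.
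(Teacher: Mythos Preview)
Your overall architecture---Stinespring dilation of $\mathcal N^*$, the Kadison--Schwarz defect $\Psi[Y^\dagger Y]-\Psi[Y]^\dagger\Psi[Y]$ controlling the failure of multiplicativity, and a telescoping reduction---is exactly the paper's strategy, and your $n=1$ bound $\lVert\mathcal N^*[\mathcal R_i^*[\phi_i]]-\phi_i\rVert\le\epsilon\lVert\phi_i\rVert$ is the same as theirs. But the bookkeeping you sketch does not produce the stated constant, and in fact yields the wrong $\epsilon$-scaling.

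Concretely: with $X=\prod_{i<k}\mathcal R_i^*[\phi_i]$ and $Y=\mathcal R_k^*[\phi_k]$, your two-factor bound gives
\[
\bigl\lVert V^\dagger X(\id-VV^\dagger)YV\bigr\rVert \;\le\; \lVert X\rVert\cdot\bigl\lVert(\id-VV^\dagger)YV\bigr\rVert \;\le\; \sqrt{3\epsilon}\,\textstyle\prod_{i\le k}\lVert\phi_i\rVert,
\]
which is $O(\sqrt\epsilon)$, not $O(k\epsilon)$. Summing the $n$ telescoped terms then gives $O(n\sqrt\epsilon)\prod_i\lVert\phi_i\rVert$, not $\tfrac12\epsilon\,n(3n-1)\prod_i\lVert\phi_i\rVert$. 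Your assertion that the ``accumulated defect terms scale like $k$'' is not supported by the argument you wrote down: you only extract a factor of $\sqrt\epsilon$ from the $Y$ side and use the trivial bound $\lVert V^\dagger X\rVert\le\lVert X\rVert$ on the left.

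The paper closes this gap by splitting $(\id-VV^\dagger)=\sqrt{\id-VV^\dagger}\cdot\sqrt{\id-VV^\dagger}$ and proving, by a separate induction, that the \emph{left} factor $\bigl\lVert V^\dagger\bigl[\prod_{i\le k}W_i^\dagger\phi_iW_i\bigr]\sqrt{\id-VV^\dagger}\bigr\rVert$ is itself bounded by $k\sqrt{3\epsilon}\prod_{i\le k}\lVert\phi_i\rVert$. This inductive lemma is what lets both sides of the defect contribute $\sqrt\epsilon$, giving $k\cdot 3\epsilon$ per term and hence $\tfrac32 n(n-1)\epsilon$ after summation; combined with the $n\epsilon$ from the single-operator errors one gets the stated $\tfrac12 n(3n-1)\epsilon$. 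To run that induction the paper also Stinespring-dilates each $\mathcal R_i$ (as $W_i$), not just $\mathcal N$, so that one can insert $\id=VV^\dagger+(\id-VV^\dagger)$ \emph{between} the factors of $X$ and peel them off one at a time. That step is absent from your outline and is the missing idea.

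Two minor points: $\id-VV^\dagger$ need not be a projection here (only $V^\dagger V\le\id$), though $0\le\id-VV^\dagger\le\id$ suffices for the inequality you need; and the maps are only sub-unital, so your telescope's boundary term at $k=0$ involves $\mathcal N^*[\id]\ne\id$ in general---this is harmless since $\lVert\id-\mathcal N^*[\id]\rVert\le\epsilon$, but should be accounted for.
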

\begin{proof}
  Let $\mathcal A \subseteq B(\mathcal H_A)$ and $\mathcal B \subseteq B(\mathcal H_B)$. We define the operators $V: \mathcal H_A \to \mathcal H_B \otimes \mathcal H_E$ and $W_i: \mathcal H_B \to \mathcal H_A \otimes \mathcal H_{E'}$ to be Stinespring dilations (as in \cref{lem:stinespring}) of $\mathcal{N} = [ V(\mathcal E_{\mathcal A} (\cdot)) V^\dagger ]\big\vert_{\mathcal B}$ and $\mathcal{R}_i = [ W_i(\mathcal E_{\mathcal B} (\cdot)) W_i^\dagger ]\big\vert_{\mathcal A_i}$ respectively. Since $\mathcal N$ and $\mathcal R_i$ are quantum operations,
  \begin{align*}
  \lVert V \rVert, \lVert W_i \rVert  \leq 1.
  \end{align*}
   Let  $T_i = \sqrt{\id -V V^\dagger} W_i^\dagger \phi_i W_i V$. Then
  \[ 0
  \leq  T_i^\dagger T_i
  \leq V^\dagger W_i^\dagger \phi_i^\dagger (\id - W_i V V^\dagger W_i^\dagger) \phi_i  W_i V
  = \mathcal N^*[\mathcal R_i^*[\phi_i^\dagger \phi_i]] - {\mathcal N^*[\mathcal R_i^*[\phi_i]]}^\dagger \mathcal N^*[\mathcal R^*[\phi_i]]
  = \Delta_1 + \Delta_2,
  \]
  where $\Delta_1 := \mathcal N^*[\mathcal R_i^*[\phi_i^\dagger \phi_i]] - \phi_i^\dagger \phi_i$ and $\Delta_2 := \phi_i^\dagger \phi_i - {\mathcal N^*[\mathcal R_i^*[\phi_i]]}^\dagger \mathcal N^*[\mathcal R_i^*[\phi_i]]$.
  By assumption,
  \begin{align*}
    \lVert \mathcal N^*[\mathcal R_i^*[\chi]] - \chi \rVert \leq \epsilon \lVert \chi \rVert,
  \end{align*}
 for all operators $\chi \in \mathcal A_i$. Hence we can bound $\lVert \Delta_1\rVert \leq \epsilon \lVert \phi \rVert^2$ and $\lVert \Delta_2\rVert \leq 2 \epsilon \lVert \phi \rVert^2$
  and it follows that
  \begin{equation}\label{eq:tnormsmall}
    \lVert T_i \rVert  \leq \sqrt{3 \epsilon} \lVert\phi_i\rVert.
  \end{equation}
  As a result, we see that
  \begin{equation*}
  \begin{aligned}
	&\quad\left\lVert V^\dagger  \textstyle \left[\prod_{i=1}^k W_i^\dagger \phi_i W_i \right]\sqrt{(1 - V V^\dagger) }\right\rVert\\
	&\leq \,\textstyle\left\lVert  V^\dagger \left[\prod_{i=1}^{k-1} W_i^\dagger \phi_i W_i\right] V V^\dagger W_k^\dagger\phi_k W_k \sqrt{1 - V V^\dagger} \right\rVert + \left\lVert V^\dagger \left[\prod_{i=1}^{k-1} W_i^\dagger \phi_i W_i\right] (1 -V V^\dagger )W_k^\dagger\phi_k W_k \sqrt{1 - V V^\dagger}\right\rVert\\
	&\leq \, \textstyle\lVert T_k^\dagger \rVert \prod_{i=1}^{k-1} \lVert \phi_i \rVert +  \left\lVert V^\dagger \left[\prod_{i=1}^{k-1} W_i^\dagger \phi_i W_i\right] \sqrt{1 - V V^\dagger}\right\rVert \lVert \phi_k \rVert
  \,\leq\, \textstyle \sqrt{3 \epsilon} \,k\, \prod_{i=1}^{k} \lVert \phi_i \rVert,
  \end{aligned}
  \end{equation*}
  where the first inequality is the triangle inequality, the second follows from the submultiplicativity of the operator norm, and the last inequality uses induction and \eqref{eq:tnormsmall}. Hence
  \begin{equation*}
  \begin{aligned}
   &\quad\textstyle\lVert V^\dagger \left[\prod_{i=1}^k W_i^\dagger \phi_i W_i\right]  (1 - V V^\dagger) W_{(k+1)}^\dagger \phi_{(k+1)} W_{(k+1)} V \prod_{i=k+2}^n V^\dagger W_{i}^\dagger \phi_{i} W_i V\rVert
   \\&\leq \, \lVert V^\dagger  \textstyle \left[\prod_{i=1}^k W_i^\dagger \phi_i W_i \right]\sqrt{(1 - V V^\dagger) }\rVert  \lVert T_{k+1} \rVert \prod_{i=k+2}^n\lVert \phi_i \rVert
   \leq \, 3 k \epsilon \prod_{i=1}^n \lVert \phi_i \rVert,
	\end{aligned}
	\end{equation*}
	and
	\begin{equation}\label{eq:approxmul1}
	\begin{aligned}
	&\quad\textstyle\bigl\lVert \mathcal N^*[\prod_{i=1}^{n}\mathcal R_i^*[\phi_i] ] - \prod_{i=1}^n  \mathcal N^*[\mathcal R_i^*[ \phi_i]] \bigr\rVert
	=\textstyle\, \bigl\lVert V^\dagger \left[\prod_{i=1}^{n} W_i^\dagger \phi_i W_i\right]  V - \prod_{i=1}^n V^\dagger W_i^\dagger  \phi_i W_i V \bigr\rVert\\
	&\leq \textstyle \sum_{k=1}^{n-1} \lVert V^\dagger \left[\prod_{i=1}^k W_i^\dagger \phi_i W_i \right] (1 - V V^\dagger) W^\dagger_{(k+1)} \phi_{k+1} W_{k+1} V \left[\prod_{j = k+2}^{n} V^\dagger W_j^\dagger \phi_j W_j V\right]\rVert
	 \leq  \frac{3}{2} \epsilon n (n-1).
	\end{aligned}
	\end{equation}
	Finally,
	\begin{align}\label{eq:approxmul2}
 	\bigl\lVert  \prod_{i=1}^n  \mathcal N^*[\mathcal R_i^*[ \phi_i]] - \prod_i \phi_i \bigr\rVert \leq \sum_{i=1}^n \lVert \mathcal N^*[\mathcal R_i^*[ \phi_i]] - \phi_i \rVert \prod_{j \neq i} \lVert \phi_j \rVert \leq n \epsilon \prod_{i=1}^n \lVert \phi_i \rVert.
 	\end{align}
 	From \eqref{eq:approxmul1} and \eqref{eq:approxmul2}, we see that \eqref{eq:approx mul} follows immediately by the triangle inequality.
\end{proof}

Although we proved this result for general quantum operations on finite-dimensional von Neumann algebras, the special case of ordinary quantum channels between density matrices of finite-dimensional Hilbert spaces follows immediately by considering $\mathcal A_i = \mathcal A = B(\mathcal H_A)$ and $\mathcal B = B(\mathcal H_B)$. Perhaps surprisingly, the result does not appear to have been previously known, even for a single recovery channel and for ordinary subspace quantum error correction.

To conclude this section, we provide a differential formula for the Choi operator of the recovery map $\mathcal{R}$, which was given in eqn.~\eqref{qBayesDerivative} above:
\begin{equation*}
\Phi_{\mathcal{R}} = \left.\frac{d}{dt}\right|_{t = 0} \log(\overline{\mathcal{N}[\sigma]} \otimes \sigma^{-1} + t \, \Phi_{\mathcal{N}^*}),
\end{equation*}
where $\mathcal{N}^*$ denotes the adjoint of the channel $\mathcal{N}$, and $\overline{\mathcal{N}[\sigma]}$ is the complex conjugate of $\mathcal{N}[\sigma]$.  First recall the integral formula (see, for example,~\cite[Lemma~3.4]{Sutter2016})
\begin{equation}
\label{eq:integral formula}
\left.\frac{d}{dt}\right|_{t = 0} \log(A + t \, B) = \int_{-\infty}^\infty dt\,\beta_0(t) \, A^{-\frac{1}{2} + \frac{it}{2}} B A^{- \frac{1}{2} - \frac{it}{2}}.
\end{equation}
Then we have by direct computation
\begin{align*}
&\left.\frac{d}{dt}\right|_{t = 0} \log(\overline{\mathcal{N}[\sigma]} \otimes \sigma^{-1} + t \, \Phi_{\mathcal{N}^*}) \\
= &\int_{-\infty}^\infty dt\,\beta_0(t) \, {\left(\overline{\mathcal{N}[\sigma]} \otimes \sigma^{-1}\right)}^{-\frac{1}{2} + \frac{it}{2}} \Phi_{\mathcal{N}^*} {\left(\overline{\mathcal{N}[\sigma]} \otimes \sigma^{-1}\right)}^{- \frac{1}{2} - \frac{it}{2}} \\
= &\int_{-\infty}^\infty dt\,\beta_0(t) \, (I \otimes \sigma^{\frac{1}{2} - \frac{it}{2}}) \left({\left(\overline{\mathcal{N}[\sigma]} \otimes I \right)}^{-\frac{1}{2} + \frac{it}{2}}\,(I \otimes \mathcal{N}^*)[\Phi] \, {\left(\overline{\mathcal{N}[\sigma]}\otimes I \right)}^{- \frac{1}{2} - \frac{it}{2}} \right) \,(I \otimes \sigma^{\frac{1}{2} + \frac{it}{2}}) \\
= &\int_{-\infty}^\infty dt\,\beta_0(t) \,(I \otimes \sigma^{\frac{1}{2} - \frac{it}{2}}) \, (I \otimes \mathcal{N}^*)\left[(I \otimes {\mathcal{N}[\sigma]}^{-\frac{1}{2} + \frac{it}{2}})\, \Phi \, (I \otimes {\mathcal{N}[\sigma]}^{- \frac{1}{2} - \frac{it}{2}})\right] \,(I \otimes \sigma^{\frac{1}{2} + \frac{it}{2}}) \\
\end{align*}
which, comparing to eq.~\eqref{eq:universal recovery map vN}, is indeed $\Phi_{\mathcal{R}}$.

\section{Entanglement Wedge Reconstruction for Algebras}
We are interested in reconstructing bulk operators acting on the entanglement wedge of a subregion of the boundary CFT using only boundary data supported in that subregion.
A simplified picture of our setup will include the following data:
a boundary CFT modeled by an algebra of observables $\mathcal M_{\text{CFT}}$,
a subalgebra $\mathcal M_A \subseteq \mathcal M_{\text{CFT}}$ of operators acting on a boundary subregion $A$ of the CFT,
a code space modeled by an algebra of bulk observables $\mathcal M_{\text{code}}$, and
a subalgebra $\mathcal M_a \subseteq \mathcal M_{\text{code}}$ of operators acting inside the entanglement wedge of $A$.
We also have a bulk-to-boundary map $\mathcal J\colon S(\mathcal M_{\text{code}}) \to S(\mathcal M_{\text{CFT}})$ taking code states in the bulk to states on the boundary.
The setup is as follows:
\begin{center}
	\begin{tikzcd}[column sep=5em, row sep=3em]
	  \mathcal M_a \arrow[r, yshift=0.7ex, hookrightarrow, "\text{incl}"]
	  \arrow[d, dashed, "\mathcal R^*", swap] &
	  \mathcal M_\text{code} \\
	  \mathcal M_A \arrow[r, swap, hookrightarrow, "\text{incl}"] &
	  \mathcal M_\text{CFT} \arrow[u, swap, "\mathcal J^*"]
	\end{tikzcd}
	\qquad\qquad
	\begin{tikzcd}[column sep=3em, row sep=3em]
	  S(\mathcal M_a) 
	  &
	  \arrow[l, yshift=0.7ex, "\text{res}", swap] S(\mathcal M_\text{code}) \arrow[d, "\mathcal J"] \\
	  S(\mathcal M_A) \arrow[u, dashed, "\mathcal R"] &
	  \arrow[l, "\text{res}"] S(\mathcal M_\text{CFT})
	\end{tikzcd}
\end{center}
The map $\mathcal R^*$ (dashed) is the desired map implementing entanglement wedge reconstruction that we will construct in \cref{thm:algebraic entanglement wedge} below.
A fully general treatment of the problem would include infinite-dimensional algebras of observables.
However, there are many technical difficulties in infinite dimensions, and as such we will restrict ourselves to finite-dimensional algebras as in~\cite{Harlow2016}.
Our setup and analysis will closely resemble the one used in the main body of this document, with appropriate changes made to account for the more general algebraic structure.

The following lemma generalizes our main results to this setup, showing that approximate equality of relative entropies implies approximate entanglement wedge reconstruction even at the level of algebras.

\begin{thm}\label{thm:algebraic entanglement wedge}
  Let $\mathcal M_a\subseteq\mathcal M_\text{code}$ and $\mathcal M_A\subseteq\mathcal M_\text{CFT}$ be finite-dimensional von Neumann algebras, $\mathcal J\colon S(\mathcal M_\text{code}) \to S(\mathcal M_\text{CFT})$ a quantum channel, and $\epsilon>0$ such that
	\begin{equation}\label{JLMSeqn}
	  \abs{ D(\rho_a\Vert\sigma_a) - D({\mathcal J[\rho]}_A\Vert{\mathcal J[\sigma]}_A) } \leq \epsilon
	\end{equation}
	for all $\rho,\sigma\in S(\mathcal M_\text{code})$, where we denote by $\rho_X$ the restriction $\rho\vert_{\mathcal M_X}$ of a state $\rho$ to some subalgebra $\mathcal M_X$.
	Then there exists a map $\mathcal R\colon S(\mathcal M_A)\to S(\mathcal M_a)$ such that, for all $\rho\in S(\mathcal M_\text{code})$ and $\phi_a,\phi'_a\in\mathcal M_a$,
	\begin{enumerate}[label={(\roman*)}]
	\item\label{item:i} ${\lVert \rho_a - \mathcal R[{\mathcal J[\rho]}_A] \rVert}_1 \leq \delta$,
	\item\label{item:ii} $\bigl\lvert \Expect{\mathcal R^*[\phi_a]}_{\mathcal J[\rho]} - \Expect{\phi_a}_\rho \bigr\rvert \leq \delta \lVert\phi_a\rVert$,
	\item\label{item:iii} $\bigl\lvert \Expect{\prod_{i=1}^n \mathcal R^*[\phi_i]}_{\mathcal J[\rho]} - \Expect{\prod_{i=1}^n \phi_i}_\rho \bigr\rvert \leq (\sqrt{2 \ln 2} + (3 n - 1) n)\, \sqrt{\epsilon}\prod_i \lVert \phi_i \rVert$,
	\end{enumerate}
	where $\delta := (2 + \sqrt{2\ln 2}) \sqrt\epsilon$.
	Explicitly,
	\begin{equation}\label{eq:explicit bulk recovery}
	  \mathcal R^*[\phi_a] = \int dt \, \beta_0(t) \, e^{\frac{1-it}2 H_A} \mathcal J{\bigl[\mathcal E_a[e^{-\frac{1-it}2 H_a} \phi_a e^{-\frac{1+it}2H_a}]\bigr]}_A \, e^{\frac{1+it}2 H_A},
	\end{equation}
	where $H_a = -\log\sigma_a$ and $H_A = -\log {\mathcal J[\mathcal E_a[\sigma_a]]}_A$ for some arbitrary fixed full-rank state $\sigma_a\in S(\mathcal M_a)$, with $\mathcal E_a$ the state extension map $S(\mathcal M_a)\subseteq S(\mathcal M_\text{code})$ from~\eqref{eq:state extension}.
\end{thm}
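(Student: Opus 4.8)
The plan is to reproduce, at the level of finite-dimensional von Neumann algebras, the two-step argument of the main text; the essential preliminary move is to trade the \emph{global} bulk-to-boundary channel $\mathcal J$---which does not factor through the restriction $\rho\mapsto\rho_a$, so that \cref{lem:universal recovery} does not apply to it directly---for a genuinely \emph{local} channel. Concretely, I would fix an arbitrary full-rank $\sigma_a\in S(\mathcal M_a)$ and set $\mathcal N\colon S(\mathcal M_a)\to S(\mathcal M_A)$, $\mathcal N[\omega]:={\mathcal J[\mathcal E_a[\omega]]}_A$, with $\mathcal E_a$ the state extension of~\eqref{eq:state extension}; being a composition of quantum channels, $\mathcal N$ is a quantum channel. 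I would then \emph{define} $\mathcal R:=\mathcal R_{\sigma_a,\mathcal N}$ to be the twirled Petz map of \cref{lem:universal recovery} for $\mathcal N$ with reference $\sigma_a$ (in general a quantum operation, which suffices for all three claims). The explicit formula~\eqref{eq:explicit bulk recovery} then falls out by dualizing~\eqref{eq:universal recovery map vN}: compute $\mathcal R^*$, noting that dualizing the Petz map turns the $\mathcal N^*$ in~\eqref{eq:universal recovery map vN} back into $\mathcal N$; substitute $\mathcal N={\mathcal J[\mathcal E_a[\cdot]]}_A$; rewrite the powers of $\sigma_a$ and of $\mathcal N[\sigma_a]$ in terms of $H_a=-\log\sigma_a$ and $H_A=-\log{\mathcal J[\mathcal E_a[\sigma_a]]}_A$; and use $\beta_0(t)=\beta_0(-t)$ to send $t\mapsto -t$---the same manipulation that produces~\eqref{ExplicitRecoveryHilbert}.

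The key input is that, by~\eqref{eq:state extension}, both $\mathcal E_a[\rho_a]$ and \emph{any} code state $\rho$ with $\rho\vert_{\mathcal M_a}=\rho_a$ restrict to $\rho_a$ on $\mathcal M_a$, so I can feed~\eqref{JLMSeqn} two different pairs. Applied to $\bigl(\mathcal E_a[\rho_a],\mathcal E_a[\sigma_a]\bigr)$ it yields $\lvert D(\rho_a\Vert\sigma_a)-D(\mathcal N[\rho_a]\Vert\mathcal N[\sigma_a])\rvert\le\epsilon$ for every $\rho_a$, so \cref{lem:universal recovery} (applicable since $\sigma_a$ is full-rank) together with a Fuchs-van de Graaf inequality~\cite{FuchsvandeGraaf} gives $\lVert\rho_a-\mathcal R[\mathcal N[\rho_a]]\rVert_1\le 2\sqrt\epsilon=:\delta_1$. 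Applied instead to $\bigl(\mathcal E_a[\rho_a],\rho\bigr)$, whose $\mathcal M_a$-restrictions coincide so that one of the relative entropies vanishes, it gives $D(\mathcal N[\rho_a]\Vert{\mathcal J[\rho]}_A)\le\epsilon$, hence $\lVert\mathcal N[\rho_a]-{\mathcal J[\rho]}_A\rVert_1\le\sqrt{2\ln 2}\,\sqrt\epsilon=:\delta_2$ by Pinsker's inequality. Since $\mathcal R$ does not increase the trace norm of Hermitian operators, the triangle inequality gives $\lVert\rho_a-\mathcal R[{\mathcal J[\rho]}_A]\rVert_1\le\delta_1+\delta_2=\delta$, which is claim~(i); and claim~(ii) is immediate from duality and H\"older, since $\bigl\lvert\Expect{\mathcal R^*[\phi_a]}_{\mathcal J[\rho]}-\Expect{\phi_a}_\rho\bigr\rvert=\bigl\lvert\tr\phi_a\bigl(\mathcal R[{\mathcal J[\rho]}_A]-\rho_a\bigr)\bigr\rvert\le\delta\lVert\phi_a\rVert$.

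For claim~(iii) I would route through the intermediate state $\mathcal J[\mathcal E_a[\rho_a]]$ and split the error in two. Because $\prod_i\mathcal R^*[\phi_i]\in\mathcal M_A$ and $\lVert\prod_i\mathcal R^*[\phi_i]\rVert\le\prod_i\lVert\phi_i\rVert$ (as $\mathcal R^*$ is completely positive and sub-unital), the estimate $\lVert\mathcal N[\rho_a]-{\mathcal J[\rho]}_A\rVert_1\le\delta_2$ bounds $\bigl\lvert\Expect{\prod_i\mathcal R^*[\phi_i]}_{\mathcal J[\rho]}-\Expect{\prod_i\mathcal R^*[\phi_i]}_{\mathcal J[\mathcal E_a[\rho_a]]}\bigr\rvert$ by $\sqrt{2\ln 2}\,\sqrt\epsilon\prod_i\lVert\phi_i\rVert$. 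For the other piece, $\Expect{\prod_i\mathcal R^*[\phi_i]}_{\mathcal J[\mathcal E_a[\rho_a]]}=\tr\bigl(\mathcal N^*[\prod_i\mathcal R^*[\phi_i]]\,\rho_a\bigr)$ while $\Expect{\prod_i\phi_i}_\rho=\tr\bigl((\prod_i\phi_i)\,\rho_a\bigr)$, so \cref{thm:approx mul} applied to $\mathcal N$ with the single recovery operation $\mathcal R$ (whose recovery error for $\mathcal N$ is $\delta_1=2\sqrt\epsilon$ by the previous paragraph) bounds it by $\tfrac12\delta_1 n(3n-1)\prod_i\lVert\phi_i\rVert=(3n-1)n\,\sqrt\epsilon\prod_i\lVert\phi_i\rVert$. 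Adding the two contributions reproduces exactly the constant $\sqrt{2\ln 2}+(3n-1)n$ of~(iii).

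I expect the genuine difficulty to be confined to the first step: the hypothesis~\eqref{JLMSeqn} speaks only about the \emph{global} states $\mathcal J[\rho]$, so the recovery theory is not directly applicable, and passing to $\mathcal N={\mathcal J[\mathcal E_a[\cdot]]}_A$ buys applicability only at the price of having to control the gap between $\mathcal J[\rho]$ and $\mathcal J[\mathcal E_a[\rho_a]]$ on $\mathcal M_A$. The point that renders this harmless---and which I would verify with care---is that feeding~\eqref{JLMSeqn} the pair $(\mathcal E_a[\rho_a],\rho)$ collapses it to a bound on one relative entropy, which Pinsker turns into the needed trace-norm estimate. The rest---the appeals to \cref{lem:universal recovery}, Fuchs-van de Graaf, Pinsker, and \cref{thm:approx mul}, and the somewhat lengthy but entirely mechanical dualization producing~\eqref{eq:explicit bulk recovery}---is routine, and the constants assemble precisely as in the statement.
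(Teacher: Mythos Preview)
Your proposal is correct and follows essentially the same route as the paper's proof: define the local channel $\mathcal N={\mathcal J[\mathcal E_a[\cdot]]}_A$, apply \cref{lem:universal recovery} and Fuchs--van de Graaf to get the $2\sqrt\epsilon$ bound, use \eqref{JLMSeqn} with the pair $(\mathcal E_a[\rho_a],\rho)$ and Pinsker to control the gap $\lVert\mathcal N[\rho_a]-{\mathcal J[\rho]}_A\rVert_1$, combine via the triangle inequality for~(i), dualize for~(ii), and for~(iii) split through $\mathcal N[\rho_a]$ and invoke \cref{thm:approx mul} with error parameter $2\sqrt\epsilon$. The derivation of the explicit formula, including the $t\mapsto-t$ symmetry of $\beta_0$, is also as in the paper (which simply asserts it), and your constants match.
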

\begin{proof}
  We consider the ``local'' quantum channel
  \[ \mathcal N\colon S(\mathcal M_a)\to S(\mathcal M_A), \quad \omega_a \mapsto {\mathcal J[\mathcal E_a[\omega_a]]}_A. \]
  A crucial property of $\mathcal N$ is that, for any state $\rho\in S(\mathcal M_\text{code})$,
  \begin{equation}\label{eq:crucial property}
  \begin{aligned}
  	&\lVert \mathcal N[\rho_a] - {\mathcal J[\rho]}_A \rVert_1
  = \lVert {\mathcal J[\mathcal E_a[\rho_a]]}_A - {\mathcal J[\rho]}_A \rVert_1
  \leq \sqrt{2\ln 2\, D({\mathcal J[\mathcal E_a[\rho_a]]}_A \Vert {\mathcal J[\rho]}_A)} \\
  &\qquad \leq \sqrt{2\ln 2\, (D({\mathcal E_a[\rho_a]}_a \Vert \rho_a) + \epsilon)}
  = \sqrt{2\ln 2\, \epsilon},
  \end{aligned}
  \end{equation}
  where the first inequality is the familiar relation between trace norm and relative entropy; the second inequality is our assumption~\eqref{JLMSeqn}, and the last identity is~\eqref{eq:state extension}.

  We now let $\mathcal R=\mathcal R_{\sigma_a,\mathcal N}$ denote the recovery map~\eqref{eq:universal recovery map vN} associated with some full-rank state $\sigma_a\in S(\mathcal M_a)$.
  With this choice of $\mathcal R$, \cref{eq:explicit bulk recovery} holds true.
  Moreover, \cref{lem:universal recovery} shows that, for all $\rho\in S(\mathcal M_\text{code})$,
  \[ -2\log F(\rho_a, (\mathcal R \circ \mathcal N)[\rho_a])
  \leq D(\rho_a\Vert\sigma_a) - D(\mathcal N[\rho_a]\Vert\mathcal N[\sigma_a])
  = D({\mathcal E_a[\rho_a]}_a\Vert{\mathcal E_a[\sigma_a]}_a) - D({\mathcal J[\mathcal E_a[\rho_a]]}_A\Vert{\mathcal J[\mathcal E_a[\sigma_a]]}_A)
  \leq \epsilon \]
  by using our assumption~\eqref{JLMSeqn} and~\eqref{eq:state extension} a second time.
  Thus, $F(\rho_a, (\mathcal R \circ \mathcal N)[\rho_a]) \geq 1 - \epsilon/2$, and using the Fuchs-van de Graaf inequality,
  \begin{equation}\label{eq:fuchs recovery}
  	\lVert \rho_a - \mathcal R[\mathcal N[\rho_a]] \rVert_1
  	\leq 2\sqrt{1 - F{(\rho_a, (\mathcal R \circ \mathcal N)[\rho_a])}^2}
	  \leq 2\sqrt{\epsilon}.
  \end{equation}
  We obtain~\ref{item:i} from~\cref{eq:crucial property,eq:fuchs recovery} and the triangle inequality.
  This readily implies~\ref{item:ii}, since
  \begin{align*}
    \abs{\Expect{\mathcal R^*[\phi_a]}_{\mathcal J[\rho]} - \Expect{\phi_a}_\rho}
  = \abs{\Expect{\mathcal R^*[\phi_a]}_{{\mathcal J[\rho]}_A} - \Expect{\phi_a}_{\rho_a}}
  = \abs{\Expect{\phi_a}_{\mathcal R[{\mathcal J[\rho]}_A]} - \Expect{\phi_a}_{\rho_a}}
    \leq \delta \lVert\phi_a\rVert.
  \end{align*}
  For~\ref{item:iii}, we observe that
  \begin{align*}
	&\textstyle\quad \abs{\Expect{\prod_i \mathcal R^*[\phi_i] }_{\mathcal J[\rho]} - \Expect{\prod_i \phi_i}_\rho}
	= \abs{\Expect{\prod_i \mathcal R^*[\phi_i] }_{{\mathcal J[\rho]}_A} - \Expect{\prod_i \phi_i}_{\rho_a}} \\
	&\textstyle\leq \abs{\Expect{\prod_i \mathcal R^*[\phi_i] }_{\mathcal N[\rho_a]} - \Expect{\prod_i \phi_i}_{\rho_a}} + \sqrt{2\ln 2\, \epsilon}\,\, \prod_i \lVert \phi_i \rVert \\
	&\textstyle= \abs{\Expect{\mathcal N^*[\prod_i \mathcal R^*[\phi_i] ]}_{\rho_a} - \Expect{\prod_i \phi_i}_{\rho_a}} + \sqrt{2\ln 2\, \epsilon}\,\, \prod_i \lVert \phi_i \rVert \\
	&\textstyle\leq  (\sqrt{2\ln 2} + (3  n - 1) n )\,\sqrt{\epsilon}\, \prod_i \lVert \phi_i \rVert
  \end{align*}
  where the first inequality is~\eqref{eq:crucial property} and the second inequality is \cref{thm:approx mul}.
\end{proof}
When the fiducial state $\sigma_a\in S(\mathcal M_a)$ is chosen to be the maximally mixed state then the map~\eqref{eq:explicit bulk recovery} takes a particularly simple form.
In this case, $\mathcal E_a[\sigma_a]=\id_\text{code}/d_\text{code}=:\tau_\text{code}$, where $d_\text{code}:=\tr[\id_\text{code}]$.
We obtain
\begin{equation}\label{eq:simplified algebraic}
  \mathcal R^*[\phi_a] = \frac1{d_\text{code}} \int dt \, \beta_0(t) \, e^{\frac{1-it}2 H_A} {\mathcal J[\phi_a]}_A \, e^{\frac{1+it}2 H_A},
\end{equation}
where we recall that $H_A = -\log {\mathcal J[\tau_\text{code}]}_A$ is the modular Hamiltonian of the boundary region $A$ associated with the maximally mixed code state.
\Cref{eq:simplified algebraic} can be rewritten as follows (\cite[Lemma 3.4]{Sutter2016}),
\begin{equation}\label{eq:simplified algebraic derivative}
  \mathcal R^*[\phi_a]
= \frac1{d_\text{code}} \frac{d}{dt}\bigg\vert_{t=0} \log {\mathcal J[\tau_\text{code} + t \phi_a]}_A
= -\frac1{d_\text{code}} \frac{d}{dt}\bigg\vert_{t=0} H_A[\tau_\text{code} + t \phi_a]
\end{equation}
where we have introduced the notation $H_A[\rho] := -\log {\mathcal J[\rho]}_A$ for the boundary modular Hamiltonian on subregion $A$ associated with a bulk state $\rho\in S(\mathcal M_\text{code})$.
That is, the boundary operator $\mathcal R^*[\phi_a]$ can be found as the response of the boundary modular Hamiltonian to a perturbation of the fiducial bulk state in the direction of the bulk operator $\phi_a$.

If $\sigma_a$ is \textit{not} the maximally mixed state but is instead some arbitrary state, \cref{eq:simplified algebraic derivative} for $\mathcal{R}^*[\phi_a]$ no longer holds.  However, using \cref{eq:integral formula} we can write down a similar equation for the Choi operator of $\mathcal{R}^*$ itself:
\begin{equation*}
\Phi_{\mathcal{R}^*} = \left.\frac{d}{dt}\right|_{t = 0} \log(\overline{\sigma}_a^{-1} \otimes {\mathcal{J}[\mathcal{E}_a[\sigma_a]]}_A + t \, \Phi_{{\mathcal{J}[\mathcal{E}_a[\cdot]]}_A}),
\end{equation*}
where $\Phi_{{\mathcal{J}[\mathcal{E}_a[\cdot]]}_A}$ is the Choi operator of ${\mathcal{J}[\mathcal{E}_a[\cdot]]}_A$.

%

Finally we show that correlation functions of bulk operators are preserved even when each operator is reconstructed using a different entanglement wedge.
\begin{corr}\label{corr:multientwedge}
Let $\mathcal M_{a_i}\subseteq\mathcal M_\text{code}$ and $\mathcal M_{A_i}\subseteq\mathcal M_\text{CFT}$ be sets of finite-dimensional von Neumann algebras, $\mathcal J\colon S(\mathcal M_\text{code}) \to S(\mathcal M_\text{CFT})$ a quantum channel such that, for each pair of algebras $\mathcal M_{a_i}$ and $\mathcal M_{A_i}$, the JLMS condition \eqref{JLMSeqn} holds for some $\epsilon > 0$. Then
\begin{align}\label{eq:multientwedge}
\textstyle\bigl\lvert\, \Expect{\prod_{i=1}^n \mathcal R_{A_i}^*[\phi_i]}_{\mathcal J[\rho]} - \Expect{\prod_{i=1}^n \phi_i}_\rho \bigr\rvert \leq \frac{1}{2} n (3n - 1)  (2 + \sqrt{2\ln 2}) \sqrt\epsilon \prod_{i=1}^n \lVert\phi_i\rVert,
\end{align}
where the recovery maps $\mathcal R_{A_i}$ are defined by applying the explicit construction \eqref{eq:explicit bulk recovery} to the pairs of algebras $\mathcal M_{a_i}$ and $\mathcal M_{A_i}$.
\end{corr}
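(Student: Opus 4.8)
The plan is to derive this corollary directly from \cref{thm:approx mul}, feeding in the per-wedge recovery guarantee of \cref{thm:algebraic entanglement wedge}. First I would set $\mathcal A = \mathcal M_\text{code}$, $\mathcal B = \mathcal M_\text{CFT}$ and $\mathcal A_i = \mathcal M_{a_i}$, and take the channel ``$\mathcal N$'' of \cref{thm:approx mul} to be the bulk-to-boundary channel $\mathcal J$ itself, which is a quantum operation $P(\mathcal M_\text{code})\to P(\mathcal M_\text{CFT})$. For each $i$ I would let $\mathcal R_{A_i}\colon S(\mathcal M_{A_i})\to S(\mathcal M_{a_i})$ be the explicit map \eqref{eq:explicit bulk recovery} associated with the pair $(\mathcal M_{a_i},\mathcal M_{A_i})$ and some fixed full-rank fiducial state, and then form the composite $\widetilde{\mathcal R}_i\colon S(\mathcal M_\text{CFT})\to S(\mathcal M_{a_i})$, $\widetilde{\mathcal R}_i[\omega] := \mathcal R_{A_i}[\omega\vert_{\mathcal M_{A_i}}]$, i.e., first restrict a boundary state to the subregion $A_i$ and then apply $\mathcal R_{A_i}$. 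Since restriction is a quantum channel and $\mathcal R_{A_i}$ is a quantum operation (\cref{lem:universal recovery}), each $\widetilde{\mathcal R}_i$ is again a quantum operation, hence an admissible recovery map for \cref{thm:approx mul}.

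Next I would verify the hypothesis of \cref{thm:approx mul}. Unwinding definitions, $\widetilde{\mathcal R}_i\circ\mathcal J[\rho] = \mathcal R_{A_i}[\mathcal J[\rho]_{A_i}]$ and $\rho\vert_{\mathcal M_{a_i}} = \rho_{a_i}$, so the required bound $\lVert \widetilde{\mathcal R}_i\circ\mathcal J[\rho] - \rho\vert_{\mathcal M_{a_i}}\rVert_1 \le \delta$ is exactly part~\ref{item:i} of \cref{thm:algebraic entanglement wedge} applied to the pair $(\mathcal M_{a_i},\mathcal M_{A_i})$, for which the JLMS condition~\eqref{JLMSeqn} is assumed; here $\delta = (2 + \sqrt{2\ln 2})\sqrt\epsilon$. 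Thus \cref{thm:approx mul} applies, with the role of its ``$\epsilon$'' played by $\delta$, giving
\[
  \Bigl\lVert \mathcal J^*\bigl[\prod_{i=1}^n \widetilde{\mathcal R}_i^*[\phi_i]\bigr] - \prod_{i=1}^n \phi_i \Bigr\rVert \le \tfrac12\, \delta\, n(3n-1)\prod_{i=1}^n \lVert \phi_i\rVert
\]
for all $\phi_i\in\mathcal M_{a_i}$.

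Finally I would convert this operator-norm bound into the stated correlator bound. Because $\widetilde{\mathcal R}_i^*$ is just $\mathcal R_{A_i}^*$ post-composed with the inclusion $\mathcal M_{A_i}\hookrightarrow\mathcal M_\text{CFT}$, one has $\widetilde{\mathcal R}_i^*[\phi_i] = \mathcal R_{A_i}^*[\phi_i]$ as an operator in $\mathcal M_\text{CFT}$, and $\mathcal N^* = \mathcal J^*$. Using the adjoint identity $\Expect{\,\cdot\,}_{\mathcal J[\rho]} = \Expect{\mathcal J^*[\,\cdot\,]}_\rho$, followed by H\"older's inequality and $\lVert\rho\rVert_1 = 1$, the difference $\bigl\lvert \Expect{\prod_i \mathcal R_{A_i}^*[\phi_i]}_{\mathcal J[\rho]} - \Expect{\prod_i \phi_i}_\rho \bigr\rvert$ is at most the operator norm above, which equals $\tfrac12\, n(3n-1)(2 + \sqrt{2\ln 2})\sqrt\epsilon\,\prod_i \lVert\phi_i\rVert$, precisely~\eqref{eq:multientwedge}.

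I do not anticipate a real obstacle: the substantive work, namely approximate multiplicativity for the adjoints of near-perfect recovery maps allowing a different recovery map correcting a different subalgebra for each factor, has already been carried out in \cref{thm:approx mul}, which was designed with exactly this multi-wedge application in view. The only points that require care are bookkeeping ones: the recovery maps produced by \cref{thm:algebraic entanglement wedge} act on $S(\mathcal M_{A_i})$ rather than on the full boundary, so one must pre-compose with the restriction channel and confirm the composite is still a legitimate quantum operation; and one must be careful to feed the recovery error $\delta$ from \cref{thm:algebraic entanglement wedge}, part~\ref{item:i}, into \cref{thm:approx mul}, rather than the raw parameter $\epsilon$ of the JLMS condition~\eqref{JLMSeqn}.
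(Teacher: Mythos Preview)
Your proposal is correct and follows essentially the same approach as the paper's own proof: take $\mathcal J$ as the channel ``$\mathcal N$'' of \cref{thm:approx mul}, absorb the restriction to each $\mathcal M_{A_i}$ into the recovery map, and feed in the bound $\delta = (2+\sqrt{2\ln 2})\sqrt\epsilon$ from part~\ref{item:i} of \cref{thm:algebraic entanglement wedge} as the ``$\epsilon$'' of \cref{thm:approx mul}. The paper states this in two sentences; you have merely unpacked the bookkeeping (the composite $\widetilde{\mathcal R}_i$, the adjoint identity, and the H\"older step) that the paper leaves implicit.
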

\begin{proof}
The proof is a simple consequence of applying \cref{thm:approx mul} to condition \ref{item:i} of \cref{thm:algebraic entanglement wedge}. If we take $\mathcal J$ to be the encoding map (denoted as $\mathcal N$ in \cref{thm:approx mul}) and include the restriction onto each boundary subalgebra $\mathcal M_{A_i}$ as part of the corresponding recovery map $\mathcal R_{A_i}$, then \eqref{eq:multientwedge} follows immediately.
\end{proof}

\section{Rindler Wedge Reconstruction from Global Reconstruction for Free Fields}
In this section, we work through an illustrative example by applying our entanglement wedge reconstruction formula in a problem motivated by AdS-Rindler reconstruction using only global HKLL~\cite{HKLL} as input.  This example is only strictly valid for free field theories, but we will nevertheless use the language of AdS/CFT for familiarity; in short, we pretend both bulk and boundary fields are free, and we comment on the difficulties that arise when the boundary field is only a generalized free field.
Let $A$ be the boundary region corresponding to a single Rindler wedge, $a$ be the entanglement wedge of $A$, $D_A$ be the boundary domain of dependence of $A$, and $\bar A$ be the complement of $A$.  We will show that local bulk operators in $a$ can be represented as linear combinations of field operators for Rindler modes confined to region $A$.  The subtlety for a true AdS/CFT calculation lies in the Rindler decomposition -- in general, no such decomposition exists for generalized free fields.

For simplicity, we consider the case of AdS$_3$.  We use Poincar\'{e} patch coordinates,
\begin{equation*}
ds^2 = \frac{\ell^2}{z^2} (-dt^2 + dx^2 + dz^2),
\end{equation*}
and we label bulk points by $Y=(t,x,z)$ and boundary points by $y=(t,x)$.

Suppose that we want to reconstruct a bulk operator $\phi(Y)$ for $Y \in a$ on the boundary of the Rindler wedge $A$.  Let us denote the vacuum state by $\ket{\widetilde{0}}$, and an excitation of the ground state by $\ket{\widetilde{1}} = \phi(Y)\ket{\widetilde{0}}$, which we take to be normalized.
We will consider a two-dimensional code space $\CH_\text{code}=\text{span}\{\ket{\widetilde{0}},\ket{\widetilde{1}}\}$.  Our goal will be to reconstruct the action of the operator $\phi(Y)$ on the code space, restricted to the boundary interval $A$.
The maximally mixed state on the code is simply $\tau = \frac{1}{2}(\proj{\widetilde{0}}+\proj{\widetilde{1}})$. Note that in this simple example $\HS_a=\HS_\text{code}$, since we do not consider any degrees of freedom living in $\bar a$.

As an operator on the code space, $\phi(Y)$ acts as $X:=\ketbra{\widetilde{1}}{\widetilde{0}} + \mathit{h.c.}$,
mapping the vacuum state to the excited state and vice versa.  With our chosen operator, code space, and maximally mixed state, we can rewrite \cref{ExplicitRecoveryHilbert} as
\begin{equation}
\label{adjointB2B2}
\CR^*[X] = \frac{1}{2}\int dt \,\beta_0(t)\, {\CN[\tau]}^{\frac{-1+it}{2}} \CN\Bigl[\ketbra{\widetilde{1}}{\widetilde{0}} + \ketbra{\widetilde{0}}{\widetilde{1}}\Bigr] {\CN[\tau]}^{-\frac{1+it}{2}},
\end{equation}
where we have introduced $\CN[\rho_a]=\tr_{\bar{A}} \bigl[ J \rho_a J^\dagger \bigr]$ as a shorthand.

In order to evaluate \cref{adjointB2B2}, we will need to compute terms of the form
\begin{equation*}
\mathcal{N}\bigl[\ketbra{\widetilde{x}}{\widetilde{y}}\bigr] = \tr_{\bar A}\bigl[\ketbra{x}{y}\bigr],
\end{equation*}
where $x,y\in \{0,1\}$, and the states $\ket{x}:=J\ket{\widetilde{x}}$.  The empty AdS state $\ket{\widetilde{0}}$ is mapped to the CFT ground state $J\ket{\widetilde{0}} = \ket{0}$.  The excited state $\ket{\widetilde{1}}$ is mapped via global HKLL to
\begin{equation*}
J \ket{\widetilde{1}} =:  \ket{1} = \int_{y' \in D} dy'\, K_g(Y,y')\, \Phi(y') \ket{0}\,
\end{equation*}
where $K_g$ is a bulk-to-boundary kernel ($g$ denotes ``global''), $\Phi(y)$ is a boundary operator, and $D$ is a boundary spacetime domain.

To leading order in $1/N$, $\Phi(y)$ behaves like a generalized free field \cite{Greenberg1961}.  Unfortunately, generalized free fields do not, in general, admit a decomposition into Rindler modes.  Thus, we now pretend the boundary field is instead a true free field, and we expand it in terms of Rindler modes $\{a_\ell, b_\ell\}$ adapted to $A$ and $B = \bar A$\,:
\begin{equation}
\label{RindlerExpand1}
\Phi(y) = \sum_\ell f_{a,\ell}(y) a_\ell + f_{a,\ell}^*(y) a^\dagger_\ell + f_{b,\ell}(y) b_\ell + f_{b,\ell}^*(y) b^\dagger_\ell.
\end{equation}
In order to upgrade this calculation to a true AdS/CFT computation, some care would need to be taken with respect to this decomposition for generalized free fields, but we nevertheless forge ahead in the name of pedagogy.  Note that the modes $a_\ell$ and $b_\ell$ are entangled in the state $\ket{0}$, since there is entanglement between $A$ and $B$.

With the HKLL prescription of $\ket{0} = J\ket{\widetilde{0}}$ and $\ket{1} = J\ket{\widetilde{1}}$ in hand, we can now compute the partial trace of the various matrix elements appearing in \cref{adjointB2B2} with respect to the region $\bar A$.  We begin with
$\tr_B\bigl[\proj{0} + \proj{1}\bigr]$.  Since $\ket{0}$ and $\ket{1}$ are approximately distinguishable on $A$, the result of tracing out $\bar A$ (\emph{i.e.}, tracing out the $B$ modes) will be approximately block diagonal. On the upper block of the reduced density matrix we have the ground state density matrix $\rho_{A,0} = \text{Tr}_B\bigl[\proj{0}\bigr]$ for $A$, and on the lower block we have
\begin{equation*}
\label{LowerBlock1}
\rho_{A,1}= \text{Tr}_{\bar A}\bigl[\proj{1}\bigr] = \text{Tr}_{\bar A}\left[\int_{y_1 \in D} \int_{y_2 \in D} K_g(Y,y_1) K_g(Y,y_2) \Phi(y_1) \ket{0} \bra{0} \Phi(y_2) \right].
\end{equation*}

To simplify the form of $\rho_{A,1}$, a crucial fact is that the ground state $\ket{0}$ is comprised of entangled $a_\ell$ and $b_\ell$ modes.
More precisely, focusing on a single mode, there is a modular energy $E_\ell$ such that the ground state is of the form
\begin{equation*}
\ket{0} \propto \sum_n e^{- \pi E_\ell n} \ket[a_\ell]{n} \ket[b_\ell]{n}.
\end{equation*}
In such a state the reduced density matrix of $A$ has the form $\rho_{A,0} \propto \sum_n e^{- 2 \pi E_\ell n} \proj[a_\ell]{n}$, so that
\begin{equation*}
\ket{0} \propto \rho_{A,0}^{1/2} \sum_{n} |n\rangle_{a_\ell} |n\rangle_{b_\ell}.
\end{equation*}
Using the ``transpose trick'', it follows that mode operators on $\bar A$ can be written in terms of operators on $A$:
\begin{equation}\label{transposetrickformodes}
\begin{aligned}
b_\ell \ket{0} &= \rho_{A,0}^{1/2} a_\ell^\dagger \rho_{A,0}^{-1/2} \ket{0},\\
b_\ell^\dagger \ket{0} &= \rho_{A,0}^{1/2} a_\ell \rho_{A,0}^{-1/2} \ket{0},
\end{aligned}
\end{equation}
and vice versa.  We also note the following helpful identities:
\begin{equation}\label{helpfulidentities}
\begin{aligned}
\rho_{A,0}^{1/2} a_\ell \rho_{A,0}^{-1/2} &= a_\ell e^{\pi E_\ell},\\
\rho_{A,0}^{-1/2} a_\ell \rho_{A,0}^{1/2} &= a_\ell e^{-\pi E_\ell},\\
\rho_{A,0}^{1/2} a_\ell^\dagger \rho_{A,0}^{-1/2} &= a_\ell^\dagger e^{-\pi E_\ell},\\
\rho_{A,0}^{-1/2} a_\ell^\dagger \rho_{A,0}^{1/2} &= a_\ell^\dagger e^{\pi E_\ell}.
\end{aligned}
\end{equation}
Using \cref{RindlerExpand1}, we now evaluate $\CN\bigl[\ketbra{\widetilde{1}}{\widetilde{0}}\bigr] = \text{Tr}_{\bar A}\bigl[\, \ketbra{1}{0}\,\bigr]$:
\begin{equation*}
\CN[\,\ketbra{\widetilde{1}}{\widetilde{0}}\,] = \int_{y\in D} K_g(Y,y) \,\text{Tr}_{\bar A}\left[\left(\sum_\ell f_{a,\ell}(y) a_\ell + f_{a,\ell}^*(y) a^\dagger_\ell + f_{b,\ell}(y) b_\ell + f_{b,\ell}^*(y) b^\dagger_\ell \right) \proj{0}\right]
\end{equation*}
For brevity, let
\begin{align*}
\check{f}_{a,\ell} &= \int_{y \in D} K_g(Y,y) f_{\ell,a}(y), \\
\check{f}_{b,\ell} &= \int_{y \in D} K_g(Y,y) f_{\ell,b}(y).
\end{align*}
Using \cref{transposetrickformodes}, we can rewrite $\CN\bigl[\ketbra{\widetilde{1}}{\widetilde{0}}\bigr]=\int K_g \Phi \proj{0} = Q_A \proj{0}$, where
\begin{equation*}
Q_A = \sum_\ell \check{f}_{a,\ell} a_\ell + \check{f}_{a,\ell}^* a_\ell^\dagger + \check{f}_{b,\ell} \rho_{A,0}^{1/2} a_\ell^\dagger \rho_{A,0}^{-1/2} + \check{f}_{b,\ell}^* \rho_{A,0}^{1/2} a_\ell \rho_{A,0}^{-1/2}.
\end{equation*}
Finally, using the identities~\eqref{helpfulidentities}, we write $Q_A$ as
\begin{equation*}
Q_A = \sum_\ell \check{f}_{a,\ell} a_\ell + \check{f}_{a,\ell}^* a_\ell^\dagger + \check{f}_{b,\ell} a_\ell^\dagger e^{-\pi E_\ell} + \check{f}_{b,\ell}^*  a_\ell e^{\pi E_\ell},
\end{equation*}
taking note that this operator is only guaranteed to reproduce the action of $\int K_g \Phi$ when acting on $\ket{0}$.

Using $Q_A$ we can write $\CN[\ketbra{\widetilde{x}}{\widetilde{y}}] = {(Q_A)}^x \rho_{A,0} {(Q_A^\dagger)}^y$, where $x,y \in \{0,1\}$. In particular, we have that $\rho_{A,1} = Q_A \rho_{A,0} Q_A^\dagger$ and $\CN[\,\ket{\widetilde{1}} \bra{\widetilde{0}}\,] = Q_A \rho_{A,0}$.  We then rewrite $Q_A \rho_{A,0}=\rho_{A,0}^{1/2} \left(\rho_{A,0}^{-1/2} Q_A \rho_{A,0}^{1/2} \right) \rho_{A,0}^{1/2}$,
where the term in the parenthesis is still a simple sum of $a_\ell$ and $a_\ell^\dagger$ with various weights. Explicitly, it is
\begin{equation*}
\rho_{A,0}^{-1/2} Q_A \rho_{A,0}^{1/2} = \sum_\ell \check{f}_{a,\ell} a_\ell e^{-\pi E_\ell} + \check{f}_{a,\ell}^* a_\ell^\dagger e^{\pi E_\ell} + \check{f}_{b,\ell} a_\ell^\dagger + \check{f}_{b,\ell}^*  a_\ell.
\end{equation*}
Using the approximate block diagonality of $\CN[\tau]$ we can write
\begin{equation*}
{\CN[\tau]}^{\frac{-1+it}{2}} \approx {\left(\frac{1}{2}\right)}^{\frac{1-it}{2}} \left(\rho_{A,0}^{\frac{-1+it}{2}} + \rho_{A,1}^{\frac{-1+it}{2}}\right),
\end{equation*}
although the reader is cautioned that we have not analyzed the quality of this approximation.
The approximate block diagonality also implies that
\begin{equation*}
{\CN[\tau]}^{\frac{-1+it}{2}} \rho_{A,0}^{1/2} \approx {\left( \frac{1}{2}\right)}^{\frac{1-it}{2}} \rho_{A,0}^{\frac{-1+it}{2}} \rho_{A,0}^{1/2} = {\left( \frac{1}{2} \right)}^{\frac{1-it}{2}} \rho_{A,0}^{it/2},
\end{equation*}
and a similar argument shows that
\begin{equation*}
\rho_{A,0}^{1/2} {\CN[\tau]}^{\frac{-1-it}{2}} \approx {\left( \frac{1}{2} \right)}^{\frac{1+it}{2}} \rho_{A,0}^{-it/2}.
\end{equation*}
Thus, the recovery channel is proportional to
\begin{equation*}
\CR^*[\,\ketbra{\widetilde{1}}{\widetilde{0}}\,] = \int dt \,\beta_0(t)\, \rho_{A,0}^{it/2} \left(\rho_{A,0}^{-1/2} Q_A \rho_{A,0}^{1/2} \right) \rho_{A,0}^{it/2},
\end{equation*}
and we note that the factors of $1/2$ have canceled out. The combined object $\rho_{A,0}^{it/2} \left(\rho_{A,0}^{-1/2} Q_A \rho_{A,0}^{1/2} \right) \rho_{A,0}^{-it/2}$ is then
\begin{equation*}
\sum_\ell \check{f}_{a,\ell} a_\ell e^{-\pi E_\ell+i \pi E_\ell t} + \check{f}_{a,\ell}^* a_\ell^\dagger e^{\pi E_\ell-i \pi E_\ell t} + \check{f}_{b,\ell} a_\ell^\dagger e^{-i \pi E_\ell t} + \check{f}_{b,\ell}^*  a_\ell e^{i \pi E_\ell t}.
\end{equation*}
Defining $\widehat{\beta}_0(\omega) := \int dt \, \beta_0(t) e^{i \omega t}$ and noting that $\widehat{\beta}_0(-\omega)= \widehat{\beta}_0(\omega)$ by symmetry of $\beta_0(t)$, the recovery map acting on our operator is
\begin{equation}
\label{recoveryResult1}
\CR^*[\,\ketbra{\widetilde{1}}{\widetilde{0}}\,] = \sum_\ell \check{f}_{a,\ell} a_\ell e^{-\pi E_\ell} \widehat{\beta}_0(\pi E_\ell) + \check{f}_{a,\ell}^* a_\ell^\dagger e^{\pi E_\ell}\widehat{\beta}_0(\pi E_\ell) + \check{f}_{b,\ell} a_\ell^\dagger \widehat{\beta}_0(\pi E_\ell) + \check{f}_{b,\ell}^*  a_\ell \widehat{\beta}_0(\pi E_\ell),
\end{equation}
and there is an analogous expression for $\CR^*[\,\ketbra{\widetilde{0}}{\widetilde{1}}\,]$.

\Cref{recoveryResult1} is our desired result.The bulk operator $X=\ketbra{\widetilde{1}}{\widetilde{0}}+\textit{h.c.}$ can be reconstructed on the Rindler wedge using only Rindler mode operators. Moreover, only single mode operators appear.
\end{document}